\setlist{nosep}
\newcommand{\Co}{\Gamma}
\newcommand{\Nin}{\mathcal{N}_{\operatorname{in}}}
\newcommand{\Nout}{\mathcal{N}_{\operatorname{out}}}
\newcommand{\Nins}{\mathcal{N}_{\operatorname{in}}^s}
\newcommand{\until}[1]{\{1,\dots, #1\}}
\newtheorem{theorem}{Theorem}[section]
\newtheorem{corollary}[theorem]{Corollary}
\newtheorem{lemma}{Lemma}[section]
\newtheorem{proposition}[theorem]{Proposition}
\newtheorem{remark}[theorem]{Remark} 
\newtheorem{problem}[theorem]{Problem}
      \theoremstyle{plain}
\newtheorem{definition}{Definition}[section]
\DeclareSymbolFont{bbold}{U}{bbold}{m}{n}
\DeclareSymbolFontAlphabet{\mathbbold}{bbold}
\newcommand{\notto}{\overset{\operatorname{not}}{\to}}
\newcommand{\setdef}[2]{\{#1 \; | \; #2\}}
\newcommand{\din}{d_{\operatorname{in}}}
\newcommand{\dins}{d_{\operatorname{in}}^{\operatorname{s}}}
\newcommand\oprocendsymbol{\hbox{$\square$}}
\newcommand\oprocend{\relax\ifmmode\else\unskip\hfill\fi\oprocendsymbol}
\DeclareSymbolFont{bbold}{U}{bbold}{m}{n}
\DeclareSymbolFontAlphabet{\mathbbold}{bbold}
\newcommand\blfootnote[1]{%
  \begingroup
  \renewcommand\thefootnote{}\footnote{#1}%
  \addtocounter{footnote}{-1}%
  \endgroup
}
\title{A Network Formation Game for the Emergence of Hierarchies}
\author{Pedro Cisneros-Velarde \and Francesco Bullo}
\date{}
\begin{document}
\maketitle
\begin{abstract}
  We propose a novel network formation game that explains the emergence of
  various hierarchical structures in groups where self-interested or
  utility-maximizing individuals decide to establish or severe
  relationships of authority or collaboration among themselves.  We
  consider two settings: we first consider individuals who do not seek the other
  party's consent when establishing a relationship and then individuals who
  do. For both settings, we formally relate the emerged hierarchical
  structures with the novel inclusion of well-motivated hierarchy promoting
  terms in the individuals' utility functions.  We first analyze the game
  via a static analysis and characterize all the hierarchical structures
  that can be formed as its solutions.  We then consider the game played
  dynamically under stochastic interactions among individuals implementing
  better-response dynamics and analyze the nature of the converged networks.
%
\end{abstract}

\blfootnote{
  This work is supported by the U. S. Army Research Laboratory and the
  U. S. Army Research Office under grant number W911NF-15-1-0577. The
    views and conclusions contained in this document are those of the
    authors and should not be interpreted as representing the official
    policies, either expressed or implied, of the Army Research Laboratory
    or the U.S. Government.}
\blfootnote{Pedro Cisneros-Velarde (e-mail:
    pacisne@gmail.com) and Francesco Bullo (e-mail:
    bullo@ucsb.edu) are with the University of California, Santa Barbara.}
    
\section{Introduction}
Hierarchies are ubiquitous forms of organization. Not surprisingly, early comparative studies between administrative and economic hierarchies, and other studies of hierarchies in the the context of political organization, 
date back to the beginning of the twentieth century~\cite{AL:914,RM:1915}. 
Since then, their study has encompassed a large variety of disciplines like, for example,  
labor and organizational economics~\cite{AB-JW:01,SR:82,OEW:73}, moral and evolutionary psychology~\cite{MK-DL:83,TSR-APF:11,DDC:96,MVV-JT:15}, anthropology~\cite{BD:10} and sociology~\cite{PB:68,MS-FL-JP:12}. 
We remark that in the economics literature, the importance of hierarchies also comes from the study of firms: the structure of firms can be understood as hierarchical structures 
in production organizations (e.g., superiors having veto power over subordinates to access productive assets)~\cite{RHC:37,OH-JM:98}.
%
%
%
In general, the empirical study of the emergence of hierarchical structures in social networks has benefited from advancements in the field of sociometry~\cite{HAS:96}, and, theoretically, graph theoretical analysis has proven to be a useful tool for structural analysis of general hierarchical structures~\cite{DK:94}.
In this paper, we adopt a game theoretical perspective to study the emergence of hierarchical structures through what is known as \emph{network formation games}. 
We propose a 
game and analyze how its play leads to the emergence of hierarchies. The importance of our approach stems from the fact that this game theoretical study allows us to investigate the underlying structures in the individual utilities that promote the emergence of hierarchies from self-interested 
individuals. Then, the importance of this work is that it defines micro-mechanisms on the individual level that can lead to the emergence of hierarchical structures on the macro-level. 

To the best of our knowledge, there is no previous study 
about the emergence of hierarchical network structures from a network formation perspective. 
However, we briefly mention that there has been a considerable body of literature on the study of hierarchies 
from other game theoretical perspectives~\cite{MS-AvdN:01}. For example, there are studies in the literature 
about allocation rules that take into account explicit hierarchical relationships among the agents (e.g.,~\cite{MS-RPG-HN-ST:05}), and about the analysis of games where agents have veto power over the 
actions of other players as a result of an underlying hierarchical structure (e.g.,~\cite{RPG-GO-RvdB:92}). 
%
%

On the other hand, there is a mature body of work in the network formation game literature related to other applications. Generally speaking,  
in a network formation game, self-interested or utility-maximizing agents decide to form or severe links (interpreted as some sort of relationship) with other agents so that they can form network structures that maximize some utility or allocation value. The structure of connections resulting from playing the game induces an emergent network whose topology can have different interpretations according to the specific application the game is modeling. The literature on network formation games has a long trajectory which historically started from a cooperative perspective, and has later devised a huge boom from a non-cooperative perspective~\cite{MOJ:05,MS-AvdN:01}. 
We present an overview of the literature on network formation games according to a classification of different previous approaches, this classification also introduces relevant technical terms that will be useful in this paper. For other more algorithmic perspectives to the subject in terms of solutions concepts and efficiency, not mentioned in this review, we refer to the textbook treatment of the subject in~\cite[Chapter~19]{NN-TR-ET-VVV:07}. 
Network formation games have been analyzed via a \emph{static analysis}, that is, the analysis of solution concepts that describes the emergence of particular ``stable" networks as a result of the agents playing the game. Additionally, they have also been analyzed via a \emph{dynamic analysis}, that is, the study of the emergence of a subset (or possibly all) of the networks defined by the static analysis (also known as equilibrium selection) as the result of some defined dynamic rules that describe the evolution of the agents' actions~\cite{YS-MVDS:15,AW:01}. For example,~\cite{VB-SG:00} uses Nash equilibrium as solution concept, whereas~\cite{MJ-AW:02} uses the solution concept called \emph{pairwise stability} (first introduced in~\cite{MOJ-AW:96}); and both works establish a dynamical analysis of convergence towards ``stable" networks in their respective solution concept. The two solution concepts mentioned above are the ones that have been typically used in applications where the efficiency of the formed network was the main focus of study; however, in general, solution concepts are proposed (and sometimes renamed) depending on the specific application to study from the network formation perspective (whether there is only a static analysis or a dynamic one too). For example, other solution concepts exist 
like the generalization of pairwise stability called~\emph{strong stability} also studied in the context of network efficiency~\cite{MOJ-AvdN:05}, or like the solution concept introduced in~\cite{RJ-SM-JNT:06} which is 
particularly tailored for the problem of routing traffic on networks.

Broadly, the literature on network formation games can be classified according to the nature of the agents that play the game and/or the nature of the connections they can establish. 
%
We discuss first the possible nature of the agents. Traditionally, a first distinction is whether the agents are \emph{non-consensual} or~\emph{consensual}. 
A \emph{non-consensual} agent $i$ can establish links unilaterally with agent $j$ without $j$'s consent, i.e., without $j$ having to agree to accept such link (see, e.g.,~\cite{VB-SG:00}); and a~\emph{consensual} $i$ is the one who needs $j$'s consent or approval to establish a link (see, e.g.,~\cite{MJ-AW:02,RG-SS:04,EA-RJ-SM:09}).   
Another second major distinction regarding the agents is whether they all have the same structure in their utility or payoff function, i.e., the rewards or costs of making/maintaining/severing a link is the same for any agent, which is know as the \emph{homogeneous case}; or whether they have different structures in their utility function, which is the \emph{heterogeneous case}~\cite{AG-SG-JK:06}.
%

Regarding the distinction on the nature of the connections that agents can establish, network formation games can be classified depending on whether the formed network 
is \emph{undirected} (i.e., the edge or link between agents $i$ and $j$ has no direction) 
or \emph{directed} (i.e., every edge or link has an intrinsic direction, so that an edge can be directed from $i$ to $j$). The interpretation is that an undirected edge between $i$ and $j$ gives a two-way benefit of flows, i.e., the utility generated by this edge benefits both $i$ and $j$. On the contrary, a directed edge from $i$ to $j$ has a one-way flow of benefits since the utility generated by this edge benefits $i$ but may or may not benefit $j$~\cite{VB-SG:00,AG-SG-JK:06}. Examples of undirected relationships are friendships, 
collaborations (e.g., co-authorship), mutual insurance, etc. Examples of directed relationships arise naturally in communication networks (e.g., someone send an email to another person), loan relationships, subordinate-authority relationships, buyer-seller relationships, etc. We refer to~\cite[Chapter 5]{RPG:10} for more discussion on the use of undirected vs directed graphs in game theoretical modeling.

We stress that not all network formation games in the literature have been studied from a dynamic analysis (indeed, all the classifications presented above only have to do with the static analysis aspect). A first classification from a dynamic analysis is to consider whether the agents, at any time-step of the (discrete) evolution of the game, are~\emph{myopic} (e.g.,~\cite{VB-SG:00,MJ-AW:02,EA-RJ-SM:09}), i.e., they want to only maximize their current utility at any time step, or, otherwise, \emph{foresighted} (e.g.,~\cite{BD-SG-DR:05,FD:06,YS-MvsS:15}), i.e., they also want to maximize their (expected cumulative) future utility. Myopic agents are a more reasonable assumption in contexts where agents live in a big and/or complex network, where it is virtually impossible for a single agent to completely predict how her actions would influence over the rest of the whole network and thus how current actions will affect future utility.
%


\subsection{Contributions}

Our first main contribution is to propose, to the best of our knowledge, a first 
study on the emergence of hierarchical structures from the perspective of network formation games. We propose a game which allows us to study and understand how self-interested or utility-maximizing individuals lead to the emergence of networks with hierarchical structures. In particular, our work shows this through the judicious inclusion of well-motivated hierarchical promoting terms in the utility functions of the agents.  
%
%
We assume all agents are homogeneous, i.e., that we have an egalitarian society, and we also assume that 
all individuals can be of two types: non-consensual or consensual. For each type, we perform a static analysis of the emergent network structures that satisfy the 
solution concept proposed for our game and which 
is closely related to other notions in the classic literature of network formation games. Then, we present a dynamic analysis where we analyze the dynamic formation 
of these networks 
as a result of a stochastic process of pairwise interactions among agents playing better-response  dynamics. 

Our second contribution is a complete static analysis where we show that 
network structures with different levels of hierarchy emerge as \emph{equilibrium networks}, i.e., as solutions of the game, for any type of agents. In particular, individuals within the same level or rank of hierarchy in these structures can only have cooperative relationships. 
Moreover, we provide theoretical results that show the relationship between the structure or topology 
of the formed equilibrium network and the different parameters in the utility function of the agents. Then, we provide intuitive interpretations of these theoretical results and their implications in real-world or practical scenarios. 
Moreover, since the formation of hierarchies occurs within homogeneous agents, 
our results suggest that even in egalitarian societies, hierarchies seem to be a natural outcome of socio-economic relationships. Finally, we show that all possible equilibrium networks that can be
formed by non-consensual agents in any game are a strict subset of the ones that are possible to be formed in the same game by consensual agents. 

Our third contribution is to analyze our proposed network formation game when 
it is played dynamically starting from any fixed initial graph. 
In such dynamic setting, agents are stochastically chosen to perform an action and they 
play their better-response dynamics. We show that the networks, being formed dynamically, eventually converge to some equilibrium network in finite time. 
Finally, we show that for any type of agent, 
the topology of the converged 
hierarchical network, besides depending on the parameters of the utility functions, also depends 
on the particular realization of the underlying stochastic process (i.e., the solution path). 
%
%

We remark that the scope of our work, as seen in the previous contributions, does not include the analysis of the efficiency or social welfare of the resulting equilibrium networks in the game (e.g., we do not explore problems such as determining the factors that lead to a more/less efficient equilibrium network), and, as mentioned in the conclusion, we leave this analysis as future work. Instead, we focus on the full characterization and generation of equilibrium networks as hierarchical structures, from both a static and dynamic analysis.

\subsection{Notation and preliminary modeling}
 
Let $N=\until{n}$ be the finite set of agents, and, throughout the paper,
consider $n\geq 3$.  Let $ij:=(i,j)\in N\times{N}$.  Any element of $N$ is
a node of the directed graph (digraph) $G$, and $G$ is thus defined by a
set of ordered pairs between different elements of $N$.\footnote{Strictly
  speaking, a graph is a pair $(N,E)$, where $N$ is the node set $N$ and
  $E$ is the edge set $E$. For simplicity, we adopt a notation
  traditionally used in the network formation literature (e.g.,
  see~\cite{MOJ:05}) and refer to the graph by its edge set; we let the
  context specify the node set.}  An element $ij\in G$ is called an
\emph{edge} from $i$ to $j$.  Let $G+ij:=G\cup\{ij\}$ and $G-ij:=G\setminus
\{ij\}$.

%
%
We define $\Nin(i,G)=\setdef{j\in N}{ji\in G}$ and $\Nout(i,G)=\setdef{j\in N}{ij\in G}$, with the in-degree of $i$ being $\din(i,G)=|\Nin(i,G)|$. 
Given $G$ and $i\in N$, if $\Nin(i,G)=\emptyset$ ($\Nout(i,G)=\emptyset$) then $i$ is a \emph{source} (\emph{sink}). 
If $ji\in G$ and $ij\notin G$, then we say $ij$ is a \emph{single edge}; and, if $ij,ji\in G$ then we say $i$ and $j$ have an \emph{undirected edge}. A graph is \emph{complete} if there is an undirected edge between any pair of nodes.

Whenever there exists a sequence of edges $ik_1,\dots,k_{m-1}j$ that connects $i$ with $j$ with no edge appearing more than once 
and no node appearing more than in two edges, we say that there exists a path of length $m$ between $i$ and $j$. Whenever $j=i$ in the previous path, we say that this path is a \emph{cycle} containing $i$.  
We denote by $C(G)$ the condensation graph of $G$, and by $\Co(i,G)$ the connected component of $i\in N$. Graph $G$ is \emph{acyclic} if it has no cycles, i.e., it has no paths of one node to itself. A~\emph{directed cycle} is any cycle that contains at least one single edge. 

We refer to Section~\ref{app-basgraph} of the Appendix for basic graph theoretical terminology and concepts that will be used throughout the paper.
%
%

\subsubsection*{New notation and definitions}
%

Given a directed acyclic graph $G'$ and a node $i$ of $G'$, the \emph{level} of $i$, denoted by $\ell_i(G')$, is the largest number of arcs or edges from any source node of $G'$ to $i$. Also, for digraph $G$, the level of $i$ denoted by $\ell_i(C(G))$, is the largest number of arcs in $C(G)$ from any source of $C(G)$ to the strongly connected component containing $i$. See Figure~\ref{fig_conc}(a).

We also define 
$\Nins(i,G)=\Nin(i,G)\setminus\setdef{j\in\Nin(i,G)}{ij\in G}$ and $\dins(i,G)=|\Nins(i,G)|$. 

Consider a graph $G$ and any $i\in N$. Given a node $k\in\Nins(i,G)$, we define $P_i(G,k)=\{i\}\cup \setdef{j\in N}{kj\in G\text{ and }ij,ji\in G}$ and 
$P_i(G)=\cap_{k\in\Nins(i,G)}P_i(G,k)$.  
We use the term \emph{undirected} connected component whenever all edges of the connected component are undirected, and we call it \emph{complete} connected component when additionally it is a complete subgraph.
%

%
%

Given an undirected connected component that contains nodes $i$ and $j$, we say the undirected edge $ij,ji\in G$ is \emph{critical} if 
$\Co(i,G)\neq\Co(i,G-ij)$.

Finally, throughout the paper, we use the terms \emph{agent}, \emph{node} and \emph{individual} interchangeably, as well as the terms \emph{network} and \emph{graph}, and the terms \emph{edge} and \emph{link}. 
 
\begin{figure}[ht]
  \centering
  \subfloat[]{\label{f:1a}\includegraphics[width=0.22\linewidth]{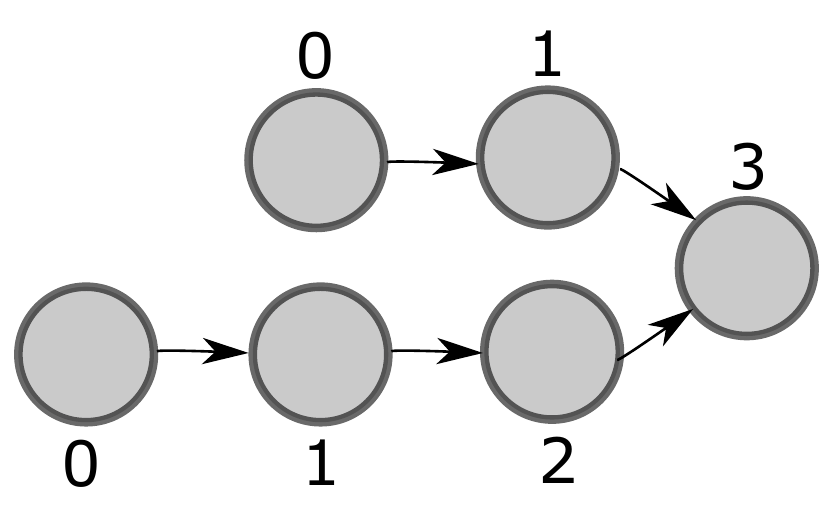}}
    \hfil
  \subfloat[]{\label{f:1b}\includegraphics[width=0.12\linewidth]{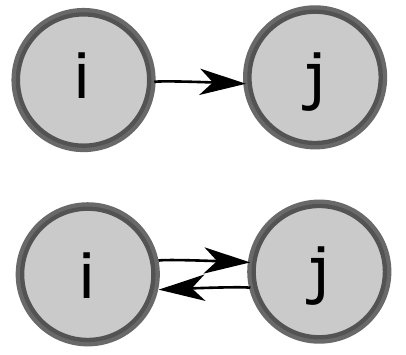}}
  \hfil`
  \subfloat[]{\label{f:1c}\includegraphics[width=0.16\linewidth]{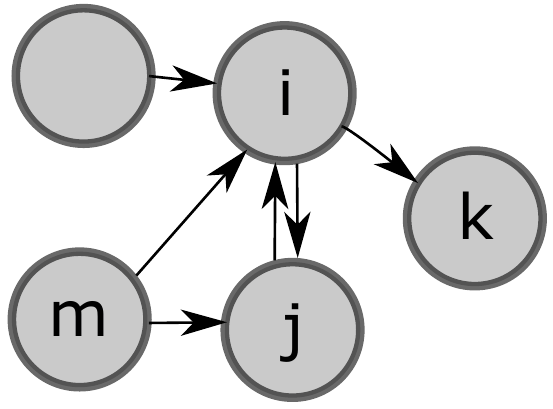}}  
\caption{In (a), we number the levels of all nodes in a directed acyclic graph. In (b), given two agents $i$ and $j$, we represent relationships of subordination in the upper figure (here, $i$ is $j$'s subordinate), and collaboration in the lower figure. In (c), 
$\Nin(i,G)=\{j,m\}$, $\Nin(j,G)=\{i,m\}$, $\Nins(i,G)=\Nins(j,G)=\{m\}$, $\Nin(k,G)=\Nins(k,G)=\{i\}$, 
$P_i(G,m)=P_j(G,m)=P_i(G)=P_j(G)=\{i,j\}$.
}\label{fig_conc}
\end{figure} 
 
\section{Proposed model and game}

In our model, any single edge $ij$ is a pairwise relationship that describes an unambiguous  hierarchical or authority relationship where agent $j$ has a higher hierarchy or exercises some authority over $i$; so we say $j$ is a \emph{supervisor} of $i$, and, respectively, that agent $i$ is a \emph{subordinate} of $j$. 
This simple concept has been used extensively in the game theory literature~\cite{RPG:10}. On the other hand, whenever $i$ and $j$ have an undirected edge, we understand that they have a collaborative or cooperative relationship, and thus, they must share the same hierarchical status and no agent is the subordinate of the other one. Accordingly, we have that 
$\Nins(j,G)$ is the set of $j$'s subordinates, $P_j(G,i)$ is the set of $j$'s collaborators who also have $i$ as a common subordinate, and $P_j(G)$ is the set of $j$'s collaborators who have the same subordinates as $j$. 
See Figure~\ref{fig_conc}(b) and Figure~\ref{fig_conc}(c).

\subsection{Utility definition}
\label{util-def}
We first specify the costs and rewards that any agent incurs according to our proposed model.
\begin{enumerate}
\item \emph{Subordinate-collaborative reward:} From an organizational perspective, this represents the positive reward an individual obtains by being the subordinate or collaborator of another competent agent. For example, consider agent $i$ wants to pursue her own goals (e.g., as an employee), then she will seek direction and help from perhaps an equally or better skilled person (e.g., a competent boss or co-worker).  
This reward can also be interpreted as an exogenous reward (e.g., a salary) that an agent receives by working for or with other people inside an organization. We represent the subordinate-collaborative reward per link 
by the positive constant $\gamma$.
%
%
\item \emph{Hierarchical reward:} Individuals who have a higher hierarchical status or position 
receive an inherent reward for being in a higher position of power. For example, from an organizational perspective, individuals with higher management or directive positions have skill sets that are highly specialized to maintain such high hierarchical positions, and, as a consequence, they are (mostly) better paid in the job market. Moreover, in political organizations and public institutions, individuals with higher hierarchical positions can exert more influence than those in lower positions (e.g., leaders or directives have more influence over ordinary individuals in the society or over their political base). Then, the reward for these individuals 
comes from the strength of their influence and the reach of the consequences of their decisions. 
Mathematically, it is natural that if $i$ and $j$ both appear in some cycle, then agent $i$ has no authority or any hierarchical power over $j$, 
and so $i$ must have the same hierarchical reward as $j$. Therefore, two agents in the same strongly connected component have the same reward. In summary, 
%
%
we model the hierarchical reward for agent $i$ in the graph $G$ as a (strictly) monotonic function which takes as an argument the level of $i$. 
%
%
Formally, the 
%
%
hierarchical reward for an agent $i$ situated in the network $G$ is
%
%
  \begin{equation}
  \label{h1}
   H(\ell_i(C(G)) 
  \end{equation}

We call $H$ the \emph{reward function}, and we assume it is non-negative and (strictly) monotonic (e.g., we have that $H(0)\geq 0$).  Intuitively, the monotonic property of $H$ means that individuals with higher positions of status are better rewarded. 
\item \emph{Management cost:} An individual $i$ who is in charge of a group of people incurs on a management cost directly proportional to the number of her subordinates. There are two possible reasons for this. First, leading or directing more subordinates requires more managerial or logistic work, thus demanding more energy than doing it with a smaller group, since $i$ has to respond to the demands of more people. Naturally, people who collaborate with $i$ (i.e., any $j$ such that $ij,ji\in G$) do not incur a management cost to $i$. Second, from an organizational perspective, a supervisor $i$ is responsible for the group she leads, and she also needs to report to her immediate superiors (i.e., the people, if any, to whom she is a subordinate) about how the people she is responsible for are performing. This reporting task, obviously, requires more work when $i$ is directing larger groups. Observe that management costs must be reduced if $i$ is part of a team of people who also manage or direct her same subordinates since this team will collaborate with her and thus reduce the workload on $i$. 
Formally, we model the management cost as 
 $c\sum_{k\in\Nins(i,G)}\frac{1}{|P_i(G,k)|}$ 
(observe that this quantity is well-defined) for some positive constant $c>0$. The chosen functional form for the management cost captures the qualitative behaviors mentioned above: the larger the number of subordinates of $i$ (i.e., the larger the set $\Nins(i,G)$), the larger the cost; and the larger the number of collaborators (i.e., the larger the set $P_i(G,k)$ with $k\in\Nins(i,G)$), the smaller the cost.
\end{enumerate}

We conclude that the utility of an individual is the sum of her subordinate-collaborative and hierarchical rewards minus her management cost. 
\begin{definition}[Utility function]
\label{def_u}
The utility that an individual $i$ incurs given a graph $G$ is
\begin{equation}
\label{util_eq}
u_i(G)=|\Nout(i,G)|\gamma+H(\ell_i(C(G))  -c\sum_{k\in\Nins(i,G)}\frac{1}{|P_i(G,k)|}
\end{equation}
\end{definition}
Clearly, if $i$ has no connections and no other agent is her subordinate in $G$, then $u_i(G)=0$. 

\textbf{\emph{Example 1: }}We present a couple of graphs for which we compute the utilities of some agents to better  illustrate the calculation using equation~\eqref{util_eq}. Consider Figure~\ref{fig_example}. If $G$ is as in (a), then
\begin{equation*}
u_i(G)=2\gamma+H(1)-\frac{3}{2}c, \quad u_j(G)=\gamma+H(1)-\frac{1}{2}c, \quad u_k(G)=H(2)-c, \quad u_m(G)=2\gamma+H(0).
\end{equation*}
If $G$ is as in (b), then
\begin{equation*}
u_i(G)=2\gamma+H(2)-c, \quad u_j(G)=\gamma+H(2)-2c, \quad u_k(G)=\gamma+H(2)-c.
\end{equation*}
%
\begin{figure}[ht]
  \centering
  \subfloat[]{\label{f:11-b}\includegraphics[width=0.19\linewidth]{figure_example_1.pdf}}
    \hfil
  \subfloat[]{\label{f:14-b}\includegraphics[width=0.24\linewidth]{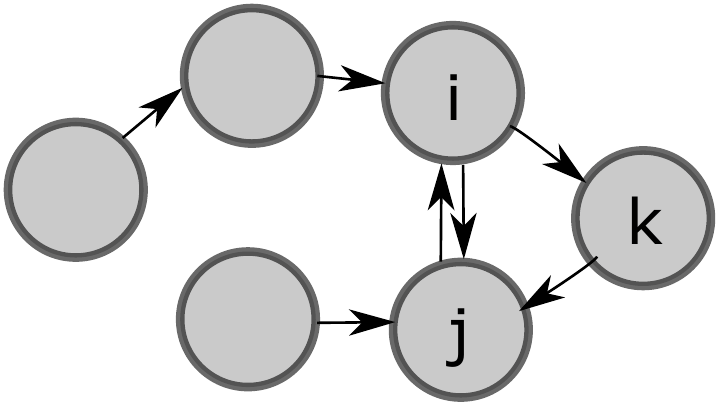}}
\caption{Network examples.
}\label{fig_example}
\end{figure}

\subsection{Game formulation and solution concept}

Having defined the utility that individuals obtain from their membership in the network $G$, we now complete the definition of our proposed network formation game in strategic (or normal) form~\cite{DF-JT:91}. 
%
%
An agent $i\in N$ can only perform two types of actions over the network $G$: establishing an edge $ij$ so that $ij\in G$, or severing an existing edge $ij$ so that $ij\notin G$. 
Regarding the agent's type, we say $i$ is \emph{consensual} whenever $i$ requires $j$'s agreement to establish the link $ij$; otherwise, $i$ is \emph{non-consensual} whenever she can perform such action irrespective of $j$'s agreement. 

\begin{definition}[Hierarchical network formation game]
A \emph{hierarchical network formation game} $\mathcal{G}$ is defined by the set of homogeneous agents $N=\until{n}$, with $n\geq 3$, the utility function for any agent $i\in N$ as defined in~\eqref{util_eq}.  
Let $\mathcal{A}_i=\{0,1\}^n$ 
be the action space of $i\in N$, so that any $a_i=(a_{ij})\in \mathcal{A}_i$ is an action (or pure strategy) for agent $i$. 
Given a graph $G$, we say agent $i$ has taken action $a_{ij}=1$ if $ij\in G$; and, otherwise, has taken action $a_{ij}=0$ if $ij\notin G$.
Let $(\gamma,H,c)$ be the \emph{utility parameters} that define the payoff of any agent as in Definition~\ref{def_u}, and $a=(a_1,\dots,a_n)$ be the action (or pure strategy) profile which takes values from the set $\mathcal{A}=(\mathcal{A}_1,\dots,\mathcal{A}_n)$. 
Then, we represent the hierarchical network formation game $\mathcal{G}$ by the tuple $(N,u,\mathcal{A},\gamma,H,c)$ along with the specification of whether all agents are consensual or non-consensual.
\end{definition}
%
%
%
As it is usual in the literature on network formation games, we only consider pure strategies throughout the paper. We also consider that any agent $i\in N$ is self-interested or utility-maximizing and so she wants to chose the action profile that maximizes her utility~\eqref{util_eq}.

\begin{remark}[About the game's modeling scope]
\label{interpret}
Firstly, we observe that an agent can only
add or remove an outgoing edge, i.e., she decides who supervises her or who she wants to work for. 
This captures scenarios where people can decide to join (or apply for) a specific team at work or some program, and are able to quit when desired. In the case of non-consensual agents, since an agent always accepts an incoming edge, this assumption does not capture scenarios where managers have more power selecting their employees compared to the power employees have in selecting their managers. However, this scenario is captured in the case of non-consensual agents: agents have the managerial power to select or hire which individuals will be their subordinates from the pool of applicants.

Secondly, we observe that an agent cannot remove an incoming edge. 
This models a sense of responsibility on the agents: 
an agent accepts to supervise an individual 
(whether the type is non-consensual or consensual) knowing that she will not be able to stop supervising 
if she ``later regrets" having such responsibility. This also models situations where an \emph{ex-ante} contract protects subordinates from being indiscriminately fired, but does not model scenarios where there is more managerial power in firing employees. 
\end{remark}

%
%
%
%
%
%

We propose and use the following solution concept for hierarchical network formation games:

\begin{definition}[Equilibrium network]
\label{eq_net}
A network $G$ is an \emph{equilibrium network} when, for any $i,j\in N$,
\begin{enumerate}
\item if $ij\in G$ then $u_i(G)>u_i(G-ij)$ and;
\item if $ij\notin G$, then 
\begin{itemize}
\item for non-consensual agents: $u_i(G)>u_i(G+ij)$,
\item for consensual agents: either $u_i(G)>u_i(G+ij)$, or, if $u_i(G)\leq u_i(G+ij)$ then $u_j(G)>u_j(G+ij)$.
\end{itemize}
\end{enumerate}
\end{definition}

\begin{remark}[Connection to other solution concepts]
The concept of an 
equilibrium network becomes the concept of strict Nash equilibrium whenever the agents are non-consensual, and it becomes the concept of pairwise stability when the agents are consensual (with the difference that our concept employ strict inequalities in contrast to the definitions of pairwise stability found in other works, e.g.,~\cite{MOJ-AW:96}). 
\end{remark}


If there exists some edge $ij$ with $i,j\in N$ that can be added (or severed) so that $u_i(G+ij)\geq u_i(G)$ (or $u_i(G-ij)\geq u_i(G)$), i.e., so that some condition on the utilities of $i$ as described in Definition~\ref{eq_net} is violated, then we say that $i$ has an \emph{incentive} or \emph{intention} to establish (or sever) the link $ij$ respectively. Obviously, 
when any agent has an incentive to severe some link, then $G$ is not an equilibrium network for any type of agents. However, if agent $i$ has an incentive to establish a link, then 
$G$ is not an equilibrium network for non-consensual agents, but may or may not be one for consensual agents (recall that for consensual agents, the addition of the edge $ij$ also depends on $j$). 

We conclude by providing 
some further interpretations and intuition on the agents' types. A non-consensual agent $i$ can be thought of being ``greedy" because $i$ accepts any agent that intends to be her subordinate or collaborator. Intuitively, we can think of greediness from two perspectives: as a \emph{bounded rationality} aspect, since $i$ can accept more subordinates even though this would mean a negative impact on her utility; or perhaps as a \emph{foresightedness} feature of $i$, since $i$ may accept any subordinate with the hope of having a larger future hierarchical reward. On the other hand, we can think of a consensual agent $i$ as being ``non-greedy" since she does not want to acquire more subordinates than what she can really handle, i.e., if acquiring more subordinates will affect her utility negatively. 

\section{Static Analysis}

\subsection{Hierarchical structures}
\label{sh_sec}

We introduce a useful class of networks that abstracts and represents the concept of hierarchies.  
%
%
Intuitively, agents with a higher level have higher positions of power, and formally, this is 
true, since agents with a higher level have a greater hierarchical reward (see equation~\eqref{h1}). 
When $k$ different connected components have the same level $\ell$, we say there are $k$ components per level $\ell$. If the maximum level that any agent in the network has is $\ell$, then we say the network has $\ell+1$ levels.

\begin{definition}[Hierarchical structures]
\label{def1}
A (weakly) connected digraph $G$ 
is a \emph{hierarchical structure} if each connected component is an undirected subgraph, and it is additionally a \emph{sequential hierarchy} if there is only one connected component per any level, and, if there is more than one level, then any node has a single edge towards any other node of a higher level.
%
\end{definition}
%
%
%
%
A hierarchical structure $G$ has the particularity that any of its connected components are composed of agents that form strong collaborative units or \emph{teams}, i.e., all agents in a connected component are collaborators. 
Thus, $G$ is naturally a hierarchical representation of the organization, power and/or influences among 
teams of people (with the understanding that a team may be composed of a single individual). Now, a sequential hierarchy $G$ is a hierarchical representation in which there is only one team at each rank or level of the hierarchy. Observe that a particular trivial case of a hierarchical structure which is also a sequential hierarchy is when the network is complete, i.e., all agents form a unique team. See Figure~\ref{fig_example_hs} for examples of hierarchical structures.

%
%
%
%
%

Observe that a \emph{pyramidal structure} is formed by a hierarchical
structure with two or more levels, generally with one connected component
per any level, and such that the number of agents that have a lower level
is greater than the ones that have a higher one. This type or class of
graphs is easily found in real life hierarchies; for example, military
organizations (there are less generals than colonels, but less colonels
than lieutenants, etc.) and clerical hierarchies in the Catholic
church~\cite{Cath} (there is only one Pope, and more bishops, but there are
less bishops than priests) have this type of hierarchies. Moreover, it is
known that many companies naturally adopt a pyramidal structure for its
management organizations~\cite{WP-RF-GA:07}, with the CEO or president at
the top, followed by a small executive leadership or vice-presidents,
followed by tiers of middle managers, and all the way down to the lowest
level employees.
%
%
\begin{figure}[ht]
  \centering
  \subfloat[]{\label{f:21}\includegraphics[width=0.4\linewidth]{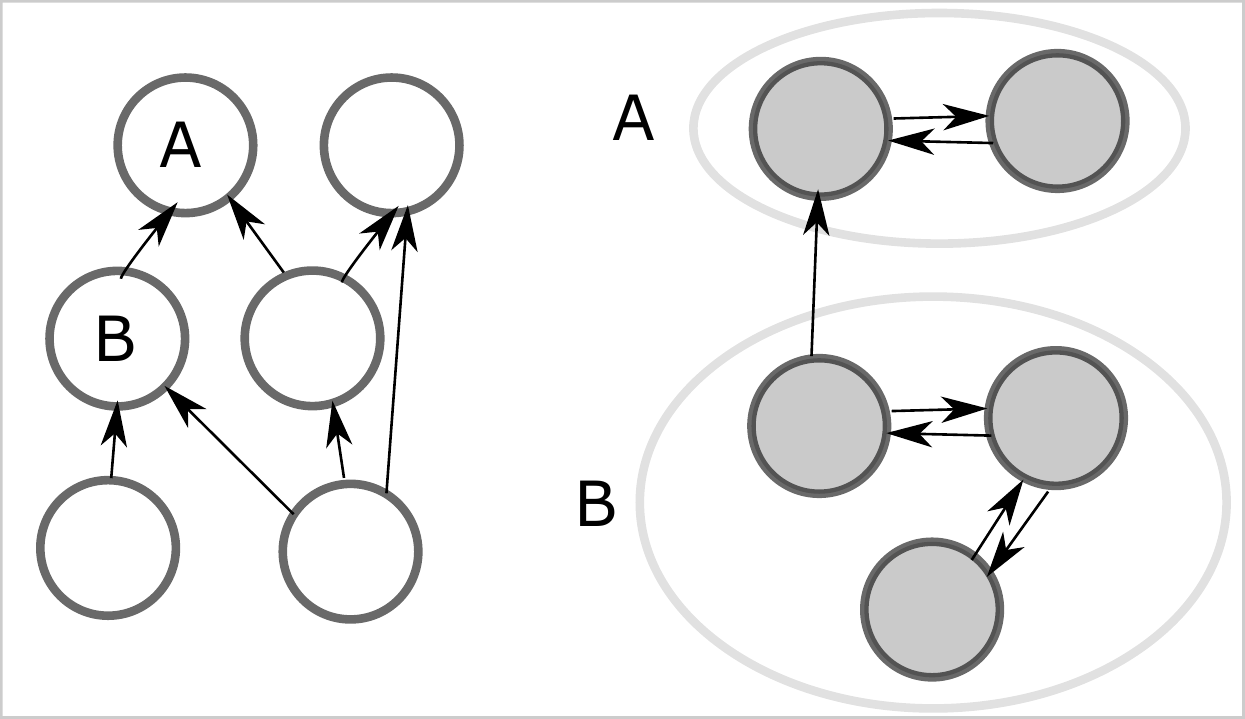}}
    \hfil
  \subfloat[]{\label{f:22}\includegraphics[width=0.4\linewidth]{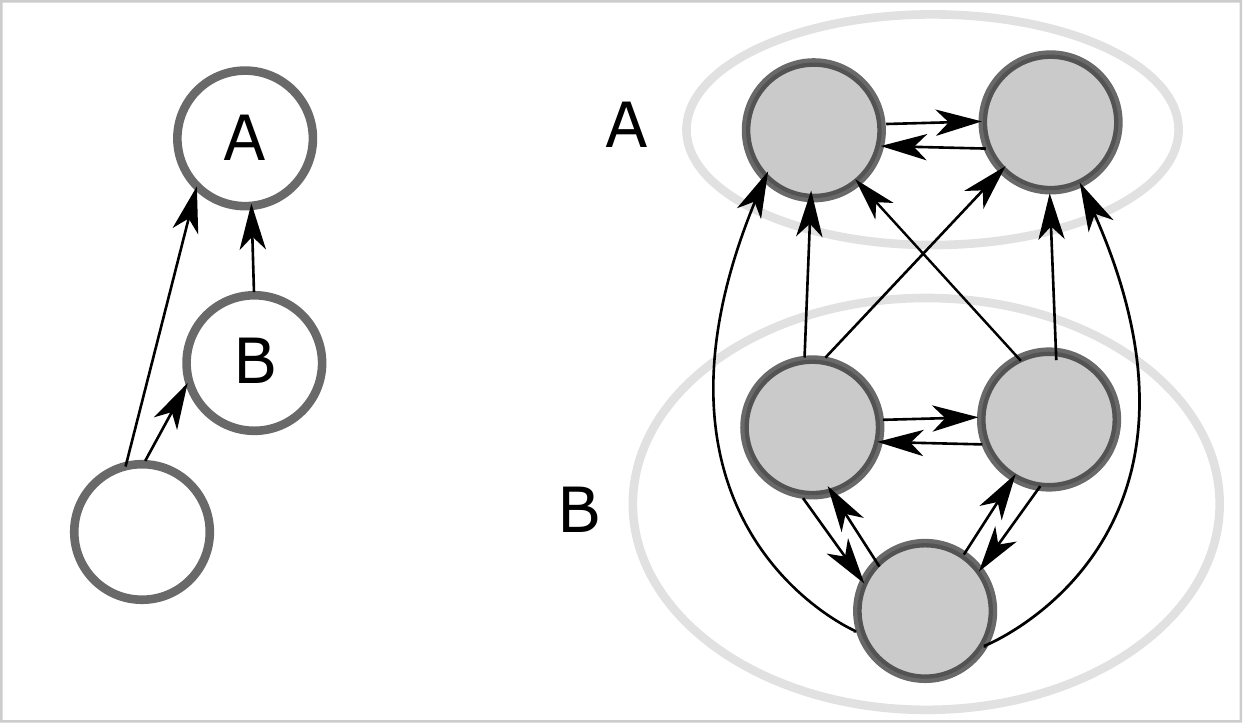}}
\caption{Consider each subfigure, (a) and (b), as an example of a digraph $G$ which has its condensation graph $C(G)$ represented by the graph with white-colored nodes on the left, 
and which has a subgraph 
corresponding to the nodes $A$ and $B$ of $C(G)$ represented by a graph with gray-colored nodes on the right. 
In (a) we have a particular example of a hierarchical structure which is not a sequential hierarchy, whereas in (b) we have a hierarchical structure which is a sequential hierarchy. In both cases, all agents corresponding to node $A$ in $C(G)$ have level $2$, whereas all agents corresponding to node $B$ in $C(G)$ have level $1$.}
\label{fig_example_hs}
\end{figure}

\subsection{Solution Analysis}

Now, we introduce the main results for this section. All relevant proofs can be found in Section~\ref{proofs-results} of the Appendix.

\begin{theorem}[Non-consensual agents]
\label{th:non-c}
Consider non-consensual agents. A network $G$ is an equilibrium network for some game $\mathcal{G}$ if and only if it is a sequential hierarchy that satisfies the following conditions:
\begin{enumerate}
\item \label{cond-1}If the network has $k>1$ levels, then the following condition holds:
\begin{equation}
\label{cond_im}
H(\ell)-H(\ell-1)>\gamma+c\left(\frac{|P_i(G)|+|\setdef{j}{\ell_j(C(G))<\ell}|}{|P_i(G)|(|P_i(G)|+1)}\right)
\end{equation}
for any $i,j\in N$ such that $\ell_i(C(G))=\ell$ and $\ell_j(C(G))=\ell-1$ with $0<\ell\leq k-1$.
\item \label{cond-2}If there exists some critical edge $ij,ji\in G$, then 
\begin{equation}
\label{cond_2}
H(\ell+1)-H(\ell)<\gamma+c\left(1+\frac{|\setdef{j}{\ell_j(C(G))<\ell}|}{2}\right) \,\text{ where }\,\ell_i(C(G))=\ell\geq 0.
\end{equation}
\end{enumerate}
Moreover, there exist sequential hierarchies that are not equilibrium networks for any hierarchical network formation game.
\end{theorem}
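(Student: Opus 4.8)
The plan is to establish the two implications of the equivalence separately and then to construct the family required by the closing sentence; all three parts rest on a single computation, namely the change in $u_i$ of~\eqref{util_eq} caused by $i$ adding or severing one outgoing edge $ij$. Such a move shifts $|\Nout(i,G)|$ by one (worth $\pm\gamma$), changes $\ell_i(C(G))$ exactly when it creates or destroys a directed cycle through $i$, and perturbs the management cost by moving nodes in or out of $\Nins(i,G)$ and rescaling the denominators $|P_i(G,k)|$. The organizing observation I would record first is a \emph{free-edge principle}: adding an edge $i\to v$ to a node $v$ from which $i$ is not reachable leaves $\ell_i(C(G))$ and the cost of $i$ unchanged, hence raises $u_i$ by exactly $\gamma>0$; by the strict inequalities in Definition~\ref{eq_net} no such edge can be absent in an equilibrium.

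For necessity, let $G$ be an equilibrium. I would first argue each strongly connected component is a complete undirected subgraph: any absent edge inside a component could be added for a free $\gamma$ (and, if the opposite edge is present, with an additional cost reduction). Next, two distinct components at the same level of $C(G)$ are incomparable in the condensation and carry no edge between them, so the free-edge principle would force one—impossible—giving exactly one component per level; the same principle forces a single edge from every node to every strictly higher node. Hence $G$ is a sequential hierarchy. It then remains to read off the two displayed inequalities as the exact nonexistence conditions for the only deviations not yet excluded: a node at level $\ell$ descending by adding a downward edge, which merges it into the block below and lowers its level, gives~\eqref{cond_im}; and a node in a two-element block severing its collaborative edge to rise one level—possible only across a \emph{critical} edge, i.e.\ when the component has two nodes—gives~\eqref{cond_2}. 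I expect the main obstacle to be the cost bookkeeping in these last computations: a single move simultaneously alters $\Nins(i,G)$ and several of the denominators $|P_i(G,k)|$, and one must also check that the stated consecutive-level inequalities are the binding ones, deviations spanning several levels being strictly dominated.

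For sufficiency, I would take a sequential hierarchy satisfying~\eqref{cond_im} and~\eqref{cond_2} and verify by cases that no elementary move pays. Severing an upward single edge, or a collaborative edge inside a block of three or more nodes, is a strict loss of $\gamma$ (plus, in the second case, added cost) with no gain in level; there is no sideways or extra upward edge to add, each block being a complete component with full upward connectivity; the only remaining moves are the descent and the ascent-by-defection, excluded precisely by~\eqref{cond_im} and~\eqref{cond_2}, with non-negativity and strict monotonicity of $H$ handling the boundary cases. As all inequalities in Definition~\ref{eq_net} are strict, $G$ is an equilibrium network.

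Finally, for the closing clause I would exhibit the three-node ``V'': nodes $a,b$ forming a two-element block at level $0$ joined by the undirected edge $ab,ba$, each carrying a single edge to a node $c$ at level $1$, so $ac,bc\in G$. This is a sequential hierarchy, and since $G-ab$ leaves $a$ and $b$ in different strongly connected components the edge $ab$ is critical, so both conditions apply. With $|P_c(G)|=1$, condition~\eqref{cond_im} at $\ell=1$ requires $H(1)-H(0)>\gamma+\tfrac{3}{2}c$, while condition~\eqref{cond_2} at $\ell=0$ requires $H(1)-H(0)<\gamma+c$; since $\gamma+c<\gamma+\tfrac{3}{2}c$ for every $c>0$, these are incompatible for all parameters $(\gamma,H,c)$. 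By the equivalence just proved, $G$ is therefore an equilibrium network for no game, and the same gadget placed above additional lower levels yields such an example for every $n\ge 3$.
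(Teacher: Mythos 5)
Your proposal is correct and takes essentially the same route as the paper's proof: your ``free-edge principle'' is exactly the repeated $+\gamma$ computation underlying Lemma~\ref{l1} and the paper's Part~II (completeness of components, one component per level, full upward connectivity), the two inequalities \eqref{cond_im} and \eqref{cond_2} arise from the same two residual deviations (adjacent-level descent and critical-edge defection), and the multi-level descents you flag as needing domination are handled in the paper by precisely the telescoping argument you anticipate (Case~2 of Part~I). Your three-node ``V'' for the final clause is a clean, correctly computed minimal instance ($\gamma+\tfrac{3}{2}c$ versus $\gamma+c$) of the same incompatibility mechanism the paper exploits in its more general construction (a two-node block one level below a sparsely populated level), and your extension to all $n\geq 3$ by adding lower levels also checks out.
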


\begin{remark}[Interpretation of Theorem~\ref{th:non-c}]
\label{rem:inter-nonc}
The importance of equations~\eqref{cond_im} and~\eqref{cond_2} is that they relate the agents' utility parameters $(\gamma,H,c)$ with the specific topology of the sequential hierarchy $G$ resulting from the hierarchical network formation game with non-consensual agents.

We start by interpreting equation~\eqref{cond_im}. A first observation is
that this equation specifies how the increment in the hierarchical reward
between the current level $\ell$ and the previous level $\ell-1$, i.e.,
$\Delta(H):=H(\ell)-H(\ell-1)$, restricts the number of agents that can
exist at level $\ell$, i.e., $|P_i(G)|$. Interestingly, the larger the
increment $\Delta(H)$, the smaller the number of agents that can possibly
exist with level $\ell$ in an equilibrium network.\footnote{To analyze this
  fact in the right-hand side of equation~\eqref{cond_im} observe that the
  function $x\mapsto\frac{x+a}{x(x+1)}$ with $x>0$ and positive constant
  $a$, is monotonically decreasing.} From an organizational point of view,
this can be interpreted as follows: whenever there is a large difference in
power between two consecutive levels in the hierarchy of an organization,
we can then expect to find less people in the higher level. For example, in
a company, there are fewer executives than managers. Therefore, the
equilibrium networks from Theorem~\ref{th:non-c} are able to model a
pyramidal structure with $N$ levels by, for example, enforcing monotonic
increments in the hierarchical reward (e.g.,
$H(1)-H(0)<H(2)-H(1)<\dots<H(N)-H(N-1)$). Moreover, this reflects a natural
organizational phenomenon: the salary difference between higher positions
in an organization are more accentuated than between lower positions.

A second observation of equation~\eqref{cond_im} is the following: if
inequality~\eqref{cond_im} is satisfied for greater values of the
management cost parameter $c$, then it is possible to find a greater number
of agents with level $\ell$.\footnote{Please, see the previous
  footnote.} This has an intuitive explanation: more agents can be found on
level $\ell$ because this implies that any agent in this level has more
collaborators with whom to reduce the management cost that otherwise would
be larger due to the larger value of $c$.
%
%

We now interpret equation~\eqref{cond_2}. First observe that, since the equilibrium networks are sequential hierarchies, all connected components are complete subgraphs. Therefore, whenever there is a critical edge, it must belong to a connected component of two nodes. Then, equation~\eqref{cond_2} says that agent $i$ with level $\ell$ and who has a critical edge, has no incentive in severing its collaborative tie with $j$ if, for example, her subordinate-collaborative reward $\gamma$ is large enough to counteract the increase on the hierarchical reward that she would get by severing her critical edge. This condition seems restrictive because it implies an upper bound on $\Delta(H)$, however, we remark that this condition is only present whenever a critical edge exists, i.e., whenever a group of only two people exist for some level. Indeed, later in this paper, in Theorem~\ref{th_conv}, a result is presented where agents dynamically play the hierarchical network formation game and a network with no critical edges is formed.

\end{remark}

\begin{theorem}[Consensual agents]
\label{th:c}
Consider consensual agents. A network $G$ is an equilibrium network for some game $\mathcal{G}$ if and only if $G$ is composed by one or multiple hierarchical structures and $G$ satisfy the following conditions:
\begin{enumerate}
%
\item\label{lab_1} If a hierarchical structure in $G$ has $k>1$ levels, then 
\begin{equation}
\label{cond_im1}
H(\ell)-H(\ell-m)>\gamma+c\sum_{p\in\Nins(i,G)\setminus\{j\}}\left(\frac{|P_i(G+ij,p)|-|P_i(G,p)|}{|P_i(G,p)||P_i(G+ij,p)|}\right)+\frac{c}{|P_i(G,j)|}> 0
\end{equation}
for any $i,j\in N$ belonging to this hierarchical structure and such that $ji\in G$, with $\ell_i(C(G))=\ell\leq k-1$ and $\ell_j(C(G))=\ell-m$ for any appropriate $0\leq m\leq\ell$.
\item \label{lab_2}If there exists $i,j\in N$ such that $\ell_i(C(G))=\ell\geq 0$ and $\ell_j(C(G))=\ell-m\geq 0$ for any appropriate $0\leq m\leq\ell$, and such that there exists no path from $i$ to $j$ (whether $i$ and $j$ belong to the same hierarchical structure or not), then 
\begin{equation}
\label{cond_im2}
H(\ell+1)-H(\ell-m)<c.
\end{equation} 
%
%
\item \label{cond-2a}If there exists a critical edge $ij,ji\in G$ such that $\ell_i(C(G))=\ell\geq 0$ and $\ell_i(C(G-ij))=\ell+1$, then 
\begin{equation}
\label{cond_2a}
H(\ell+1)-H(\ell)<\gamma+c\sum_{p\in\Nins(i,G)}\left(\frac{|P_i(G,p)|-|P_i(G-ij,p)|}{|P_i(G,p)||P_i(G-ij,p)|}\right)+c.
\end{equation}
\end{enumerate}
%
%
Moreover, there exist hierarchical structures, which are possibly sequential hierarchies, 
that are not equilibrium networks for any hierarchical network formation game.
%
%
\end{theorem}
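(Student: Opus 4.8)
The plan is to reduce the equilibrium conditions of Definition~\ref{eq_net} to a finite catalogue of elementary edge operations, compute the marginal utility of each operation in closed form, and match the resulting inequalities to \eqref{cond_im1}--\eqref{cond_2a}. The statement splits into a structural claim (an equilibrium is a disjoint union of hierarchical structures in the sense of Definition~\ref{def1}) and a quantitative claim (the three displayed inequalities). I would prove the ``if'' and ``only if'' directions together, since both rest on the same marginal computations: for each operation type I identify whether it is automatically unprofitable or blocked, or whether equilibrium forces exactly one of the inequalities. The closing ``moreover'' will then follow from the characterization itself, by exhibiting one hierarchical structure whose induced conditions are mutually exclusive for every $(\gamma,H,c)$.

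For the structural claim I first would show that an equilibrium has no single edge inside a strongly connected component, hence (by the definition of a directed cycle) no directed cycles, so each connected component is undirected. The key observation: if $ij$ is a single edge with $i,j$ in the same strongly connected component, consider the supervisor $j$ proposing the reverse edge $ji$. Adding $ji$ converts $i$ from a subordinate of $j$ into a collaborator, so $j$ strictly gains $\gamma$ and weakly lowers its management cost (the cost term for the former subordinate $i$ disappears and the relevant $|P_j(\cdot,\cdot)|$ can only grow), while its level is unchanged because $i$ and $j$ already share a component; thus $u_j(G+ji)>u_j(G)$. The target $i$ acquires no outgoing edge, keeps its level, and can only see its management cost drop (acquiring the collaborator $j$ can only enlarge the sets $P_i(\cdot,\cdot)$), so $u_i(G+ji)\ge u_i(G)$. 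By Definition~\ref{eq_net} this pair violates equilibrium, so equilibria are unions of hierarchical structures.

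For the quantitative claim I would enumerate every feasible deviation on a hierarchical structure. Severing a single edge costs exactly $\gamma$ with no level or cost change, and severing a non-critical collaborative edge strictly lowers utility (loss of $\gamma$ plus a cost increase, level fixed); these never threaten equilibrium. Severing a critical collaborative edge $ij$ is the only severance that can raise the agent's level, and a condensation argument shows the only nontrivial case is a rise by exactly one, turning ``no incentive to sever'' into \eqref{cond_2a}. On the addition side, adding $ij$ when a directed path $i\to\cdots\to j$ already exists is always blocked, since the target only acquires a subordinate and its level is already high enough, giving $u_j(G+ij)<u_j(G)$; adding $ij$ when $ji\in G$ merges the two components and drops the proposer's level, and since the target weakly benefits the binding requirement is that the proposer not gain, which is \eqref{cond_im1}; and adding $ij$ to an incomparable node of level $\le\ell_i(C(G))$ is always desired by the proposer (a free $\gamma$), so equilibrium forces the target to refuse, which after computing its level jump to $\ell_i(C(G))+1$ and its cost increase of $c$ is exactly \eqref{cond_im2}. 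Running these computations in both directions establishes the two implications.

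The main obstacle is the level bookkeeping, i.e.\ controlling $\ell_i(C(G\pm ij))$ through the longest-path definition on the condensation. Two points require care: that adding the merging edge $ij$ with $ji\in G$ drops the proposer's level to exactly $\ell_j(C(G))=\ell-m$, which is what makes \eqref{cond_im1} well posed (if the drop were smaller the proposer would enjoy a strictly profitable ``free merge,'' so ruling this out both excludes non-equilibria and pins down the level in \eqref{cond_im1}); and that removing a critical collaborative edge promotes the agent by exactly one level in the binding case. Once these are settled, the ``moreover'' follows from the characterization: let $G$ be the disjoint union of two sequential hierarchies, a three-level chain $s_1\to j_1\to i_1$ and a two-level chain $s_2\to j_2$. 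Condition~\eqref{cond_im1} applied to $i_1$ over its sole subordinate $j_1$ (with $|P_{i_1}(G,j_1)|=1$ and $\Nins(i_1,G)=\{j_1\}$) forces $H(2)-H(1)>\gamma+c$, while condition~\eqref{cond_im2} applied to the incomparable pair $i_1,j_2$ forces $H(3)-H(1)<c$; since $H$ is monotonic, $H(3)-H(1)\ge H(2)-H(1)>\gamma+c>c$, so these cannot hold simultaneously for any $(\gamma,H,c)$, and this hierarchical structure is an equilibrium for no game.
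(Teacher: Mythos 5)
Your architecture mirrors the paper's own proof: your structural step is exactly the paper's Lemma~\ref{l1} (the supervisor proposes the reverse edge, gaining $\gamma$ plus cost savings while the target weakly gains, so no single edge can survive inside a strongly connected component), and your quantitative step is the paper's case-by-case marginal-utility computation (its Cases 1--4 of Part I, rerun in reverse for Part II). Your closing counterexample is the one genuinely different ingredient: you play condition~\ref{lab_1} against condition~\ref{lab_2} with two disjoint chains, whereas the paper reuses the construction of Theorem~\ref{th:non-c} for sequential hierarchies and builds a separate multi-component non-sequential example (its inequalities \eqref{aa1}--\eqref{aa2}). Your example is correct and simpler, though note your chains are hierarchical structures but not sequential hierarchies in the sense of Definition~\ref{def1} (the edge $s_1i_1$ is missing), which is harmless since the clause only requires hierarchical structures.

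The genuine gap is that your catalogue of addition deviations is not exhaustive, so your ``if'' direction is incomplete. You handle three cases: a path $i\to\cdots\to j$ exists; $ji\in G$; and $j$ incomparable to $i$ with $\ell_j(C(G))\le\ell_i(C(G))$. Two cases remain: (i) $j$ incomparable to $i$ with $\ell_j(C(G))>\ell_i(C(G))$, which is easy ($j$'s level cannot rise and $j$ pays an extra $c$, so $j$ refuses); and (ii) $ji\notin G$ but a directed path $j\to\cdots\to i$ exists. Case (ii) is the delicate one: adding $ij$ closes one or more directed cycles and collapses every component on a $j$-to-$i$ path into a single strongly connected component, so both agents' levels move. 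This is precisely the level bookkeeping you identify as the main obstacle, and it is the one place where you never carry it out; the paper treats it as the second sub-case of its Case~1 and concludes the target refuses. Separately, your justification for the $ji\in G$ merge is wrong as stated: if the merged component inherits a predecessor of level at least $\ell_j(C(G))$ (for instance, $i$ has a second subordinate of higher level than $j$), the proposer's level drops only to some $\ell'$ with $\ell_j(C(G))<\ell'\le\ell_i(C(G))$, and when $\ell'<\ell_i(C(G))$ this is not a ``strictly profitable free merge'': the deviation can still be blocked by the proposer's own loss $H(\ell_i(C(G)))-H(\ell')$. What rescues the ``only if'' direction is monotonicity of $H$, since $H(\ell_i(C(G)))-H(\ell_j(C(G)))\ge H(\ell_i(C(G)))-H(\ell')$ means blocking at $\ell'$ still forces \eqref{cond_im1}; but the ``if'' direction then needs the chosen $H$ to block the deviation at $\ell'$ rather than at $\ell_j(C(G))$ (a point the paper itself glosses over, so here you are no less careful than the paper, but your stated reason does not hold).
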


\begin{remark}[Interpretation of Theorem~\ref{th:c}]
From Theorem~\ref{th:c} we conclude that consensual agents allow the formation of a wider variety of hierarchical structures than non-consensual agents. Indeed, given a game $\mathcal{G}$ with non-consensual agents, any of its equilibrium networks is also an equilibrium network for the same game but with consensual agents. We also remark that consensual agents allow equilibrium networks that are not connected. See Figure~\ref{fig_ex_emerged}. 

Therefore, we note that equations~\eqref{cond_im1} and~\eqref{cond_2a} from Theorem~\ref{th:c} are the more general versions of equations~\eqref{cond_im} and~\eqref{cond_2} from Theorem~\ref{th:non-c},  respectively, and, despite having a more intricate expression, inherit a similar qualitative interpretation from their counterparts (as explained in Remark~\ref{rem:inter-nonc}).  

Equation~\eqref{cond_im2} can only be present in the case of consensual agents (i.e., it has no counterpart for non-consensual agents) for two reasons. The first reason is that it must be satisfied, for example, when the equilibrium network consists of two or more hierarchical structures, or if the equilibrium is a hierarchical structure whose condensation graph has multiple sources. The second reason is that this condition is a result of the non-consensual interaction: considering the statement~\ref{lab_2} of the theorem, if agent $i$ wants to establish a connection with $j$ (observe that $i$ has a higher level than $j$), $j$ will not consent $i$'s intention of becoming her subordinate if the increase of management cost is larger than the increase of hierarchical reward that $j$ would obtain from having $i$ as a subordinate.
\end{remark}

It is easy to construct games with consensual agents in which one or multiple sequential hierarchies (including complete networks) are equilibrium networks. The next proposition gives other non-trivial examples of equilibrium networks for a game with consensual agents.

\begin{proposition}[Examples of equilibrium networks for consensual agents]
\label{lem-ex}
Consider any hierarchical structure $G$ with more than one level, one complete connected component per any  level whose size is greater than two, 
and such that the only existing edges in the graph are as follows: for any $i,j\in N$ with  $\ell_i(C(G))=\ell$ and $\ell_j(C(G))=\ell+1$, 
$pk\in G$ for any $p\in\Co(i,G)$, $k\in\Co(j,G)$. Then, 
\begin{enumerate}
\item \label{s11}$G$, or 
\item \label{s22}any network composed by two or more (isolated) hierarchical structures with the same properties as $G$ and the additional property that $|P_i(G)|\geq |P_j(G)|$ for any $i,j\in N$ with $\ell_j(C(G))=\ell_i(C(G))-1>0$. 
%
\end{enumerate}
is an an equilibrium network for some game with consensual agents.
\end{proposition}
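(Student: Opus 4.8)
The plan is to verify that both networks described in~\ref{s11} and~\ref{s22} meet the necessary and sufficient conditions of Theorem~\ref{th:c}. Since each connected component is by hypothesis a complete team, both networks are hierarchical structures, so it suffices to exhibit utility parameters $(\gamma,H,c)$ for which conditions~\eqref{cond_im1}, \eqref{cond_im2} and~\eqref{cond_2a} all hold. I would first record the structural quantities for this very regular graph: for a node $i$ with $\ell_i(C(G))=\ell$, the set $\Nins(i,G)$ is exactly the team one level below, every $P_i(G,k)$ with $k\in\Nins(i,G)$ equals $\Co(i,G)$, and hence $|P_i(G,k)|=|P_i(G)|$ equals the common team size at level $\ell$, which I denote $s_\ell$.

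With these in hand the three conditions simplify drastically. Because every team is complete with more than two nodes, deleting a single undirected edge cannot disconnect a component, so $G$ has no critical edges and condition~\eqref{cond_2a} is vacuously satisfied. Because edges join only consecutive levels, every pair $i,j$ with $ji\in G$ appearing in~\eqref{cond_im1} has $\ell_j=\ell_i-1$ (so $m=1$); substituting $|P_i(G,j)|=s_\ell$ and computing $|P_i(G+ij,p)|=s_\ell+1$ for the remaining subordinates $p$ turns~\eqref{cond_im1} into explicit lower bounds of the form $H(\ell)-H(\ell-1)>\gamma+c\,\alpha_\ell$ for each level $\ell$, where $\alpha_\ell>0$ depends only on $s_{\ell-1}$ and $s_\ell$.

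The treatment of condition~\eqref{cond_im2} is where~\ref{s11} and~\ref{s22} diverge. Since this condition arises precisely when a lower node can accept a higher node as a new subordinate without closing a cycle---that is, when the condensation graph has several sources---it plays no role for the connected, single-source network of~\ref{s11}: there any higher node $i$ and lower node $j$ are joined by an up-path from $j$ to $i$, so adding $ij$ only closes a cycle that collapses $i$'s component to the lower level and strictly lowers $i$'s hierarchical reward, a move $i$ never wants. Thus for~\ref{s11} it is enough to take any $\gamma,c>0$ and an $H$ whose consecutive increments exceed the finitely many lower bounds $\gamma+c\,\alpha_\ell$, which is always possible. For~\ref{s22}, by contrast, a top node $i$ of one structure and a node $j$ of another lie in different components, so adding $ij$ creates no cycle and raises $j$'s level to $\ell_i+1$; since $i$ then strictly gains $\gamma$ from the new outgoing edge, equilibrium forces $j$ to refuse, which is exactly the upper bound $H(\ell+1)-H(\ell-m)<c$ of~\eqref{cond_im2}, and the hypothesis $|P_i(G)|\ge|P_j(G)|$ for $\ell_j=\ell_i-1>0$ enters to control the coefficients $\alpha_\ell$.

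The main obstacle is therefore the simultaneous satisfiability, in case~\ref{s22}, of the lower bounds coming from~\eqref{cond_im1} and the upper bounds coming from~\eqref{cond_im2}: summing the former over levels gives $H(\ell_{\max})-H(0)>\sum_\ell(\gamma+c\,\alpha_\ell)$, while the latter (applied to the deepest node against a source) forces $H(\ell_{\max}+1)-H(0)<c$, and since $H$ is strictly increasing these are compatible only when $\sum_\ell\alpha_\ell$ is kept small. I would close the argument by choosing $c$ large and $\gamma$ small and invoking the team-size hypotheses (each $s_\ell>2$ together with the monotonicity $s_\ell\ge s_{\ell-1}$) to bound this sum and render the system consistent, after which a reward function $H$ with the prescribed increments can be written down explicitly, completing the verification.
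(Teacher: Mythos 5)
Your overall route is the paper's: reduce everything to the three conditions of Theorem~\ref{th:c}, note that complete components of size greater than two rule out critical edges so that~\eqref{cond_2a} is vacuous, simplify~\eqref{cond_im1} to per-level lower bounds $H(\ell)-H(\ell-1)>\gamma+c\,\alpha_\ell$ with $\alpha_\ell=\frac{s_\ell+s_{\ell-1}}{s_\ell(s_\ell+1)}$, and then exhibit parameters; your treatment of case~\ref{s11} is correct and agrees with the paper's. The gap is the closing step of case~\ref{s22}. You correctly observe that condition~\ref{lab_2}, applied to a deepest node of one structure against a bottom node of another, forces $H(\ell_{\max}+1)-H(0)<c$, so feasibility needs $\sum_\ell(\gamma+c\,\alpha_\ell)<c$; but the team-size hypotheses do \emph{not} make this sum small. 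They allow $\alpha_\ell=\tfrac12$ at every level (all teams of size $3$), so for two isolated structures with three levels each the requirements read $2\gamma+c<H(2)-H(0)<H(3)-H(0)<c$, which no $\gamma,c>0$ satisfy; taking ``$c$ large and $\gamma$ small'' cannot repair this because both sides of the obstruction are linear in $c$. Nor is this a presentational slip: in that instance the top agent of one structure strictly gains $\gamma$ by linking to a bottom agent of the other while keeping her own level, so equilibrium requires that bottom agent to strictly refuse, i.e.\ $H(3)-H(0)<c$, while within each structure $H(\ell)-H(\ell-1)>\gamma+c/2$ is necessary (otherwise a backward link is weakly profitable for its proposer and leaves its target exactly indifferent, violating the strict inequality demanded by Definition~\ref{eq_net}). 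So no parameters exist for that instance, and your argument cannot be completed as sketched.

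For comparison, the paper's own proof of case~\ref{s22} never confronts this tension: it only enforces the single-increment system $\gamma+cU(G)<H(\ell)-H(\ell-1)<c$ with $U(G)<1$, in effect reading~\eqref{cond_im2} as a constraint on consecutive increments rather than on the multi-level differences $H(\ell+1)-H(\ell-m)$ that condition~\ref{lab_2} (and the acceptance computation behind it) actually imposes. Your stricter reading is the faithful one, and under it the counterexample above shows case~\ref{s22} fails for structures with three or more levels; it survives only in restricted situations, e.g.\ two-level structures with $s_0\le s_1$, where the single chain $\gamma+c\,\alpha_1<H(1)-H(0)<H(2)-H(0)<c$ is feasible since $\alpha_1\le\frac{2}{s_1+1}<1$. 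In short: you identified the genuine difficulty that the paper's proof glosses over, but the resolution you propose --- bounding $\sum_\ell\alpha_\ell$ via the team-size hypotheses --- is not available, so the proof attempt has a real gap (one that in fact reflects a defect in the statement itself, not merely in your write-up).
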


\begin{figure}[t]
  \centering  
  \label{f:3}\includegraphics[width=0.37\linewidth]{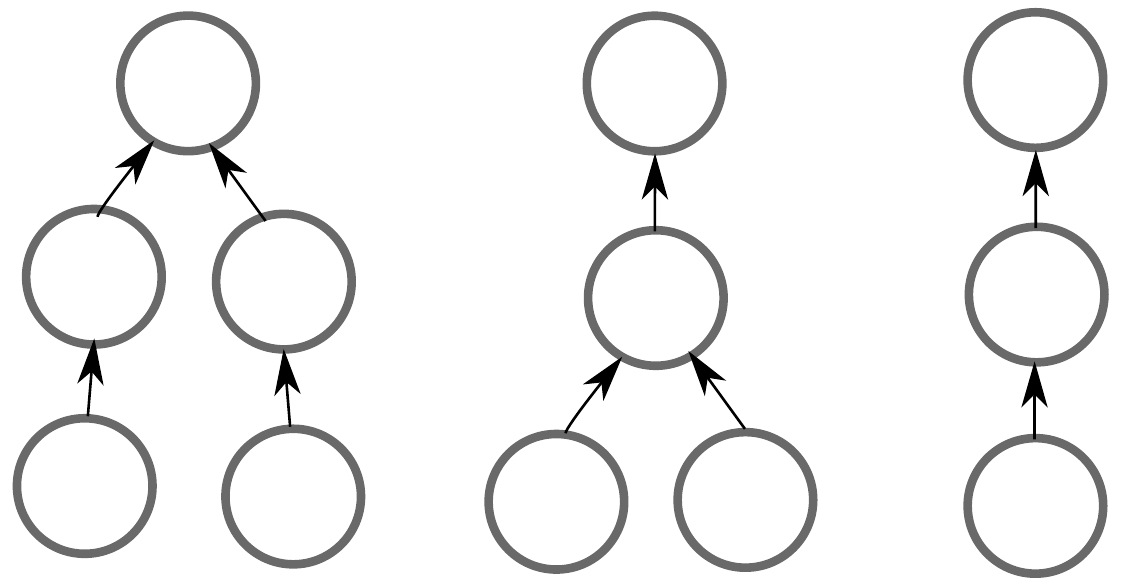}
\caption{We depict the condensation graph of a network $G$ containing three isolated hierarchical structures that are possible to result as a solution to a hierarchical network formation game with consensual agents (provided the adequate utility parameters exist).  
Notice that each isolated network has the topology of another possible
equilibrium network for consensual agents, but not for non-consensual
ones.}
\label{fig_ex_emerged}
\end{figure}
\begin{remark}[Existence of hierarchical structures that are not solutions of the game]
As mentioned by both Theorems~\ref{th:non-c} and~\ref{th:c}, there exist hierarchical structures that cannot be equilibrium networks of any hierarchical network formation game, i.e., there exist hierarchical structures 
which are distinct from the equilibrium networks of all possible tuples 
$(N,u,\mathcal{A},\gamma,H,c)$
that define a game with either consensual or non-consensual agents.
%
\end{remark}

\section{Dynamic Analysis}

We now explore the emergence of hierarchical structures when agents play a hierarchical network formation game dynamically, which we call the \emph{network formation process}. 

\begin{definition}[Network formation process]
\label{d_proc}
Consider a hierarchical network formation game $\mathcal{G}$ and any fixed graph $G_o$. At each time step $t=0,1,\dots$:
\begin{enumerate}
\item \label{s1}Consider the action profile at time $t$ as $a(t)=(a_1(t),\dots,a_n(t))$ and the respective emergent network as $G(t)$. If $t=0$, then consider $G(0)=G_o$.
\item \label{s2}Select a pair of agents $(i,j)$, $i\neq j$, randomly and independently from $N\times{N}$ 
with all possible pairs of different agents only having positive probabilities of being selected according to some time-invariant distribution.
\item \label{s3}Then:
\begin{enumerate}
\item For non-consensual agents: 
  \begin{equation}
  \label{act_br}
    a_{ij}(t+1)\begin{cases}
    =\arg\max_{a^*_{ij}\in\{0,1\}}u_i(a_{ij}^*,a_{-ij}(t)), \qquad &\text{ if }\max_{a^*_{ij}\in\{0,1\}}u_i(a_{ij}^*,a_{-ij}(t))>u_i(a(t))\\
    \in \{0,1\}\setminus\{a_{ij}(t)\}, \qquad &\text{ otherwise}
    \end{cases}.
  \end{equation}
%
%
\item For consensual agents: let first $i$ consider the action $a^*_{ij}$ as if she was a non-consensual agent. If $ij\in G(t)$, then $a_{ij}(t+1)=a^*_{ij}$. Now, if $ij\notin G(t)$ and $a^*_{ij}=1$, then since $i$ is a consensual agent, we have that $a_{ij}(t+1)=1$ if and only if $u_j(G(t)+ij)\geq u_j(G(t))$ (otherwise, obviously, $a_{ij}(t+1)=0$). 
%
\end{enumerate}
%
\end{enumerate}
\end{definition}

\begin{remark}
The network formation process, as clearly seen in Definition~\ref{d_proc}, implements a \emph{better-response dynamics}\footnote{We remark that the network formation process implements a better-response dynamics because an agent $i$ can only consider the addition or deletion of a single edge per time step. If, on the other hand, $i$ is able to select the best pure strategy in her action space $\mathcal{A}_i$ per time step, i.e., being able to alter multiple edges at the same time, we would say the network formation process implements a \emph{best-response dynamics}.} for each agent, in which each utility-maximizing agent plays the game sequentially and myopically, i.e., they maximize their current utility (given the constraint of their actions) without any foresightedness or being forward-looking. Note, however, that when the agent's utility is the same irrespective of what action she can take in the next time step, she will prefer changing her action in the next time step (since the action space has only two states, this simply means switching her current action in the next time step). 
Myopic agents are a natural assumption when networks are big or complex, since in this case it is hard for an agent to trace all the past actions from every other member of the network or predict what the repercussion of her actions will be over the rest of agents. Likewise, myopic agents have the least cognitive and informational requirements about the history of the played game: they only need to remember what happened in the previous time-step. These assumptions have been widely used in the literature, e.g.,~\cite{VB-SG:00}. 
We also remark that pairwise selection of agents in dynamic settings of network formation games (as in step~\ref{s2} from Definition~\ref{d_proc}) has been widely used in the literature of network formation games, e.g.,~\cite{MJ-AW:02}.
\end{remark}

Formally speaking, the network formation process induces a homogeneous Markov chain $\mathcal{M}$ whose state space is the set of all possible graphs with $N$ nodes (defined by all possible action profiles the agents can take), and whose transition probabilities are immediately defined as an induced measure from the underlying pairwise selection process defined in step~\ref{s2} from Definition~\ref{d_proc}. See Figure~\ref{fig:n_t} for an example of a one-step transition in the network formation process.

We proceed to analyze the convergence properties of the network formation process. All relevant proofs  can be found in Section~\ref{proofs-results1} of the Appendix.
\begin{figure}[ht]
  \centering  
  \label{f:3}\includegraphics[width=0.58\linewidth]{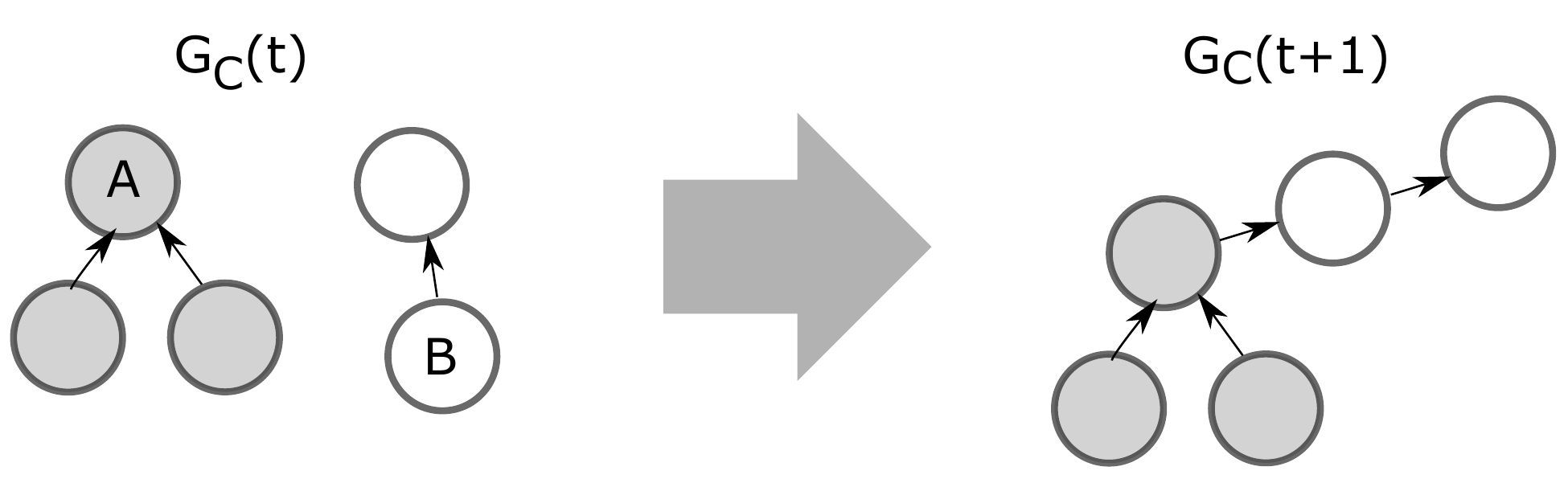}
\caption{Example of a one-step transition in the network formation process. At some time $t\geq 0$, we have an emergent network $G(t)$ with its condensation graph $G_C(t)$. Assume the pair of agents $(i,j)$ is chosen at time $t$ by the network formation process, with $i$ such that $\Co(i,G(t))$ corresponds to node $A$ and $j$ such that $\Co(j,G(t))$ corresponds to node $B$ in $G_C(t)$. Then $G_C(t+1)$, which is the condensation graph of $G(t+1)$, is as shown in the figure for non-consensual agents, and for consensual agents if and only if $H(2)-H(0)\geq c$.}
\label{fig:n_t}
\end{figure}

\begin{definition}[Invariant networks]
\label{def_inv_net}
Consider the network formation process. An \emph{invariant network} $G^*$ is any network such that if $G(t)=G^*$ then, with probability one, $G(t+1)=G^*$. 
\end{definition}
%
%
\begin{proposition}
\label{prop1}
A network is invariant for the network formation process if and only if it is an equilibrium network.
\end{proposition}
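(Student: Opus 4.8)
The plan is to prove the two implications separately, after first making a reduction that turns the probabilistic statement in Definition~\ref{def_inv_net} into a deterministic, per-pair check. Because step~\ref{s2} of Definition~\ref{d_proc} selects every distinct ordered pair $(i,j)$ with strictly positive probability, and because the update in step~\ref{s3} is deterministic once the pair is fixed and alters only the single coordinate $a_{ij}$, a network $G$ is invariant if and only if, for \emph{every} pair $(i,j)$ with $i\neq j$, the better-response update applied to the edge $ij$ returns $G$ unchanged. The operative principle I will use throughout is that a single-edge better-response move changes the graph exactly when the move is weakly utility-improving for the acting agent $i$ (the tie-breaking convention of the ``otherwise'' branch resolves indifference in favor of moving, as stressed in the remark following Definition~\ref{d_proc}), and that an \emph{addition} by consensual agents further requires the consent inequality $u_j(G+ij)\ge u_j(G)$.

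For the direction \emph{equilibrium $\Rightarrow$ invariant}, I fix an equilibrium network $G$ and an arbitrary pair $(i,j)$ and check that $ij$ is left unchanged. If $ij\in G$, condition~(i) of Definition~\ref{eq_net} gives $u_i(G)>u_i(G-ij)$, so deleting $ij$ strictly lowers $u_i$ and the better response keeps the edge (a present edge is governed unilaterally by $i$ for both agent types). If $ij\notin G$, I split on type: for non-consensual agents condition~(ii) gives $u_i(G)>u_i(G+ij)$, so $i$ has no incentive to add; for consensual agents condition~(ii) gives either $u_i(G)>u_i(G+ij)$, so $i$ does not propose the edge, or else $u_i(G)\le u_i(G+ij)$ together with $u_j(G)>u_j(G+ij)$, so $i$ proposes but $j$ declines, since the consent test $u_j(G+ij)\ge u_j(G)$ fails. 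In every case the edge is unchanged; as the pair was arbitrary, $G(t+1)=G$ with probability one.

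For the converse I argue the contrapositive: if $G$ is not an equilibrium network, some clause of Definition~\ref{eq_net} fails for some pair $(i,j)$, and I exhibit a single update that alters $G$. If clause~(i) fails there is an edge $ij\in G$ with $u_i(G)\le u_i(G-ij)$, so selecting $(i,j)$ makes $i$ delete it and $G(t+1)=G-ij\neq G$. If clause~(ii) fails for non-consensual agents there is $ij\notin G$ with $u_i(G)\le u_i(G+ij)$, and selecting $(i,j)$ yields $G+ij\neq G$. If clause~(ii) fails for consensual agents, negating the disjunction gives both $u_i(G)\le u_i(G+ij)$ and $u_j(G)\le u_j(G+ij)$, so $i$ proposes and $j$ consents, again producing $G+ij\neq G$. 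Since the offending pair is selected with positive probability, $G$ fails to be invariant. Combining the two directions gives the equivalence.

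I expect the main obstacle to be the careful matching of strict versus weak inequalities across the two definitions: the equilibrium conditions are all strict, whereas the dynamics of Definition~\ref{d_proc} move on indifference (flipping in the ``otherwise'' branch, and consenting under $\ge$). The proof hinges on the observation that ``$i$ strictly prefers the status quo'' in Definition~\ref{eq_net} is the exact complement of ``$i$ weakly prefers to move'' in Definition~\ref{d_proc}, and likewise that a consensual addition is blocked precisely when $u_j(G+ij)<u_j(G)$, which matches the strict inequality $u_j(G)>u_j(G+ij)$ in clause~(ii). Lining up these boundary (tie) cases on both sides is the only real content; the remainder is the routine reduction described above.
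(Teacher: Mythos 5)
Your proof is correct and takes essentially the same route as the paper's: a pair-by-pair case analysis that matches the strict inequalities of Definition~\ref{eq_net} against the weak-inequality (tie-breaking) move rules of Definition~\ref{d_proc}, using the fact that every ordered pair is selected with positive probability and that each update touches only the single edge $ij$. The only differences are organizational: you argue the invariant-implies-equilibrium direction by contrapositive rather than by direct contradiction, and you spell out the equilibrium-implies-invariant direction, which the paper compresses to ``it is easy to show.''
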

\begin{proof}
Consider any invariant network $G^*$ and any $(i,j)\in N\times{N}$ (with $i\neq j$). Consider first non-consensual agents. If $ij\in G^*$ then it must follow that $u_{i}(G^*)>u_{i}(G^*-ij)$; otherwise, if $u_{i}(G)\leq u_{i}(G^*-ij)$, then item~\ref{s3} from Definition~\ref{d_proc} implies that $i$ will change her action in the next time step and thus the topology of $G^*$, which gives a contradiction. A similar proof can be done to show that if $ij\notin G^*$, then $u_{i}(G^*)>u_{i}(G^*+ij)$; and by Definition~\ref{eq_net}, conclude that $G^*$ is an equilibrium network. Now, for consensual agents, it only suffices to analyze what happens when $ij\notin G^*$ and $u_i(G^*)\leq u_i(G^*+ij)$, i.e., the case in which, agent $i$'s action in the next time step would be $1$ if she was non-consensual. 
In such case, according to item~\ref{s3} from Definition~\ref{d_proc}, it must follow that $u_j(G^*+ij)<u_j(G^*)$, so that $i$ has no consent from $j$ and will not change her current action $0$ in the next time step. This implies, according to Definition~\ref{eq_net}, that $G^*$ is an equilibrium network.
%

Given an equilibrium network, it is easy to show it is also an invariant network from Definition~\ref{eq_net} and the fact that only a pair of agents is selected per time step in the network formation process.
\end{proof}
Formally speaking, Proposition~\ref{prop1} tells us that a network $G$ is an absorbing state of the Markov chain $\mathcal{M}$ if and only if it is an equilibrium network. 

\begin{problem}[Important problems]
The first and most important problem to solve is whether (i) the network formation process, which results from the agents playing (myopic) better-response dynamics starting from a fixed arbitrary initial graph, 
can lead convergence 
to an equilibrium network, or (ii) if the individuals may never settle down in forming an invariant network so that their formed networks are continually oscillating between different configurations. Once convergence to an equilibrium network is proved, 
a second
natural problem is the further exploration of the specific structures or topology of the formed equilibrium networks.
%
\end{problem}

\begin{theorem}[Convergence of the network formation process]
\label{th_conv}
Consider the network formation process, with an arbitrary initial network $G_o$, for a game $\mathcal{G}$ with a fixed type of agents and such that $H(\ell+1)-H(\ell)> \gamma+cN$ for any integer $0\leq \ell \leq N-1$. With probability one, $G(t)$ converges in finite time to an equilibrium network with no critical edges and such that it is connected (according to the conditions stipulated in Theorem~\ref{th:non-c} or Theorem~\ref{th:c} depending on the type of agents).
\end{theorem}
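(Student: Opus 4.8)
The plan is to analyze the finite Markov chain $\mathcal{M}$ through a reachability (weak acyclicity) argument together with a structural reading of its absorbing states. By Proposition~\ref{prop1} the absorbing states of $\mathcal{M}$ are exactly the equilibrium networks; the state space of all digraphs on $N$ is finite; and by step~\ref{s2} of Definition~\ref{d_proc} every ordered pair $(i,j)$ is selected with strictly positive probability at every time. Consequently it suffices to establish two facts: \textbf{(a)} from every graph there is a finite sequence of single-edge better-response moves, each admissible under Definition~\ref{d_proc} and hence realized with positive probability, terminating in an equilibrium network; and \textbf{(b)} every equilibrium network compatible with $H(\ell+1)-H(\ell)>\gamma+cN$ is connected and has no critical edge. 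Claim~\textbf{(a)} forces every recurrent state of $\mathcal{M}$ to be an absorbing state (a recurrent state could otherwise reach, and never return from, a distinct absorbing state), so a uniform lower bound on the probability of absorption within a fixed number of steps yields absorption in finite time with probability one; claim~\textbf{(b)} then identifies the limit as a connected, critical-edge-free equilibrium.

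I would dispatch~\textbf{(b)} directly from Theorems~\ref{th:non-c} and~\ref{th:c}. If a critical edge existed at level $\ell$, the network would be forced to satisfy~\eqref{cond_2} (non-consensual) or~\eqref{cond_2a} (consensual); in the first case the right-hand side is at most $\gamma+c\big(1+\tfrac{N-1}{2}\big)\leq\gamma+cN$ using $|\setdef{j}{\ell_j(C(G))<\ell}|\leq N-1$, and in the second it is at most $\gamma+cN$ because each summand lies in $[0,1]$ (the numerator $|P_i(G,p)|-|P_i(G-ij,p)|$ is $0$ or $1$) and $|\Nins(i,G)|\leq N-1$. Both bounds contradict the hypothesis $H(\ell+1)-H(\ell)>\gamma+cN$, so no critical edge survives. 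For connectedness in the consensual case, any two comparable nodes $i,j$ lying in different components (so with no connecting path) would force~\eqref{cond_im2}, i.e.\ $H(\ell+1)-H(\ell-m)<c$; but monotonicity of $H$ gives $H(\ell+1)-H(\ell-m)\geq H(\ell+1)-H(\ell)>\gamma+cN>c$, a contradiction, so the network is connected. In the non-consensual case connectedness is automatic, since equilibrium networks are sequential hierarchies, which are connected by Definition~\ref{def1}.

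The substantive step is~\textbf{(a)}, and the hypothesis is precisely its engine. Since every agent's management cost is at most $c\,|\Nins(i,G)|<cN$ while each outgoing edge contributes only $\gamma$, the per-level increment $H(\ell+1)-H(\ell)>\gamma+cN$ dominates every competing term in~\eqref{util_eq}: any single move that strictly raises the acting agent's own level is a strict better-response, and in the consensual case any supervisor whose level strictly increases upon accepting a new subordinate consents, since her hierarchical gain exceeds the entire induced management cost. A convenient terminal target is a maximal chain, namely the sequential hierarchy of $N$ single-node teams $a_1\to a_2\to\cdots\to a_N$: it has no undirected edge (hence no critical edge), is connected, and is an equilibrium under the hypothesis, because each team being a single node gives $|P_i(G)|=1$ and $|\setdef{j}{\ell_j(C(G))<\ell}|=\ell\leq N-1$ for the node of level $\ell$, so~\eqref{cond_im} holds while~\eqref{cond_2} is vacuous. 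I would reach such an equilibrium by an explicit improvement path in stages: first sever the edges that keep agents inside directed cycles, each such severance being a strict gain whenever it raises the severing agent's level; and then grow a chain upward by successive attachments $a_k\to a_{k+1}$, each a $+\gamma$ strict improvement for the attaching agent and, in the consensual case, consented to because it promotes $a_{k+1}$ by exactly one level.

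The main obstacle is producing this improvement path explicitly and verifying it for both agent types, because better-response dynamics are not guaranteed to converge in a general game and one cannot argue abstractly. Two points are delicate. First, adding or deleting a single edge can change the levels of several agents at once through the condensation $C(G)$, so each claimed strict improvement and each consent must be checked against the global level structure rather than locally; the hypothesis $H(\ell+1)-H(\ell)>\gamma+cN$ is used exactly to guarantee that a level change dominates the simultaneous $\gamma$- and management-cost changes. Second, the forced switch under indifference in~\eqref{act_br} means admissible moves from a state need not be improving, so the constructed moves must be \emph{strictly} improving in order to control the realized action, which again relies on the strictness afforded by the hypothesis. Once a finite improvement path to an equilibrium is secured from every graph, part~\textbf{(b)} and the Markov-chain argument of the first paragraph complete the proof.
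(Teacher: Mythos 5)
Your high-level skeleton is close to the paper's: the paper likewise reduces the theorem to a finite-Markov-chain absorption argument (phrased via the Decomposition Theorem and invariant sets rather than your recurrent-state formulation) plus positive-probability paths into absorbing states, and your part~\textbf{(b)} is correct — the bounds you derive on the right-hand sides of~\eqref{cond_2},~\eqref{cond_2a} and the contradiction with~\eqref{cond_im2} are exactly the structural facts the paper exploits. The genuine gap is in the execution of part~\textbf{(a)}. Your terminal target, the bare chain $a_1\to a_2\to\cdots\to a_N$, is \emph{not} an equilibrium network for non-consensual agents: by Definition~\ref{def1} a sequential hierarchy requires every node to have a single edge to \emph{every} node of higher level, and Part~II of the proof of Theorem~\ref{th:non-c} shows precisely that if $ij\notin G$ with $\ell_i(C(G))<\ell_j(C(G))$, then $i$ gains $\gamma$ by adding $ij$; in your chain, $a_1$ strictly improves by adding $a_1a_3$ (no level changes, no management cost for $a_1$), so the chain violates Definition~\ref{eq_net} for non-consensual agents and the dynamics do not stop there. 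You verified the inequalities~\eqref{cond_im} and~\eqref{cond_2} but not membership in the class of graphs that Theorem~\ref{th:non-c} actually characterizes. (The chain \emph{is} an equilibrium for consensual agents, since every prospective supervisor refuses the extra subordinate; the non-consensual target would have to be the transitive closure of the chain.)

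The second, more structural, problem is that from an arbitrary $G_o$ one cannot steer the process to any \emph{pre-specified} graph, so even a corrected target does not save the construction. A non-critical undirected edge can never be severed: removing $ij$ when $ij,ji\in G$ and $\Co(i,G)=\Co(i,G-ij)$ leaves $i$'s level unchanged, costs $\gamma$, and turns $j$ into a strict subordinate of $i$, a strict utility loss; hence any team already present in $G_o$ (say $G_o$ is complete, or contains an undirected triangle plus isolated nodes) persists forever, and a chain of singleton teams is unreachable. Likewise, severing a cycle edge is admissible only when it strictly raises the severing agent's level, which fails whenever the agent's strongly connected component survives the deletion (e.g., with edges $ij,\,jm,\,mi,\,ik,\,kj$, severing $ij$ leaves the cycle $i\to k\to j\to m\to i$ intact and $i$'s level unchanged, so the move is a strict $\gamma$-loss and is never realized). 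This is exactly why the paper does not aim at a fixed equilibrium: it constructs a positive-probability path to \emph{some} hierarchical structure with no critical edges — eliminating directed cycles by severing when that raises the level and otherwise \emph{converting} cycle edges into undirected ones using the incentive computation of Lemma~\ref{l1}, then severing critical edges — and then closes the argument with a monotonicity invariant: after the almost-surely finite random time $T$ at which such a structure is reached, the undirected skeleton $\bar{G}(t)$ can only grow, so any invariant set must be a singleton, i.e., an equilibrium network. To repair your proof you need either that monotonicity step or a target equilibrium constructed relative to $G_o$, incorporating the teams and cycles that cannot be dismantled; as written, your improvement path terminates at a non-absorbing state for non-consensual agents from every $G_o$, and is unrealizable for both agent types from many initial graphs.
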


The following corollary follows from the proof construction of Theorem~\ref{th_conv}.

\begin{corollary}[Convergence starting from an empty graph]
\label{col_conv}
Consider the network formation process with an empty initial network $G_o=\emptyset$, for a game $\mathcal{G}$ with a fixed type of agents and such that $H(\ell+1)-H(\ell)> \gamma+cN$ for any integer $0\leq \ell \leq N-1$. With probability one, $G(t)$ converges in finite time to an equilibrium network with only single edges and such that it is connected (according to the conditions stipulated in Theorem~\ref{th:non-c} or Theorem~\ref{th:c} depending on the type of agents).
\end{corollary}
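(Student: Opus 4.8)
The plan is to derive the corollary from the almost-sure finite-time convergence already granted by Theorem~\ref{th_conv}, adding a single structural invariant that is special to the empty initial graph: starting from $G_o=\emptyset$, the network formation process can never create a directed cycle, and in particular never creates an undirected edge. Since an undirected edge $ij,ji$ is a cycle of length two, an acyclic equilibrium network has only single edges, and then the ``no critical edges'' clause of Theorem~\ref{th_conv} holds vacuously. Thus it suffices to show that every graph reachable from $\emptyset$ is acyclic, and then invoke Theorem~\ref{th_conv} for convergence to a connected equilibrium network.

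First I would note that $G_o=\emptyset$ is acyclic and argue that acyclicity is preserved by every admissible transition. Deleting an edge can never create a cycle, so the only possibility is the addition of some edge $ij$ by the selected agent $i$. Suppose $G(t)$ is acyclic and that adding $ij$ would create a cycle; then $G(t)$ already contains a path from $j$ to $i$, so $j$ lies strictly below $i$, i.e. $\ell_j(C(G(t)))<\ell_i(C(G(t)))$. In $G(t)+ij$ the nodes $i$, $j$ and the whole path collapse into one strongly connected component whose level is at most $\ell_j(C(G(t)))$, so $\ell_i$ strictly decreases, by at least one.

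The core is then a utility-accounting step showing this move strictly lowers $u_i$, hence is never a better response. Writing $\ell=\ell_i(C(G(t)))\ge 1$, adding $ij$ changes $u_i$ by at most $+\gamma$ from the new outgoing edge and by at most $+cN$ from any reduction of the management cost (which always lies in $[0,cN]$ because $|\Nins(i,\cdot)|<N$), whereas the hierarchical term drops by at least $H(\ell)-H(\ell-1)$; by the hypothesis $H(\ell+1)-H(\ell)>\gamma+cN$ (applied at index $\ell-1\in[0,N-1]$) this drop exceeds $\gamma+cN$. Hence the net change in $u_i$ is strictly negative, so keeping $a_{ij}=0$ is the \emph{unique} maximiser of $i$'s current-step utility; the better-response rule of Definition~\ref{d_proc} therefore leaves $a_{ij}=0$, and for consensual agents the move is a fortiori blocked. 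Consequently no cycle is ever created, every reachable graph is a directed acyclic graph, and the limit guaranteed by Theorem~\ref{th_conv} is a connected equilibrium network with only single edges.

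The main obstacle I anticipate is the management-cost bookkeeping when a strongly connected component forms: one must rule out that the merger could raise $u_i$ through the $P_i(\cdot,\cdot)$ terms, which is exactly why the crude bound $c\sum_k 1/|P_i(\cdot,k)|\in[0,cN]$ is combined with the strong gap hypothesis so that the level drop dominates all other effects regardless of the fine structure of the merged component. A secondary point requiring care is the tie-breaking (``flip'') clause of the dynamics: because a cycle-creating addition is \emph{strictly} suboptimal rather than a tie, the flip clause is never triggered toward it, so acyclicity holds with probability one along \emph{every} trajectory from $\emptyset$, not merely along a single constructed path.
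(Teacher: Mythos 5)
Your proof rests entirely on the invariant that, starting from $G_o=\emptyset$, no directed cycle can ever form, and that invariant is false. The precise false step is the level bookkeeping: when adding $ij$ closes a cycle, the merged strongly connected component has level at most $\ell_i(C(G(t)))$, \emph{not} at most $\ell_j(C(G(t)))$, and equality $\ell_i(C(G(t)+ij))=\ell_i(C(G(t)))$ holds whenever some \emph{other} in-path of $i$ sustains her level. In that case the hierarchical term does not drop at all, so your ``drop of at least $H(\ell)-H(\ell-1)$'' never materializes and the gap hypothesis is irrelevant; meanwhile $i$ gains $\gamma$ for the new outgoing edge \emph{and} sheds the management cost of $j$, who leaves $\Nins(i,\cdot)$. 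Concretely, take non-consensual agents $a,i,j$ and $G=\{ji,ai\}$, reachable from $\emptyset$ in two strictly improving steps. Then
\begin{equation*}
u_i(G)=H(1)-2c,\qquad u_i(G+ij)=\gamma+H(1)-c,
\end{equation*}
since in $G+ij$ the component $\{i,j\}$ still has level $1$ (it is fed by the arc $ai$) and $\Nins(i,G+ij)=\{a\}$. Hence $u_i(G+ij)-u_i(G)=\gamma+c>0$ and $i$ creates the $2$-cycle (an undirected edge) with positive probability. The same failure occurs for consensual agents: from $G=\{ji,mk,ki\}$, agent $i$ gains $\gamma+c$ by adding $ij$ (her level stays $2$ via the path through $m$ and $k$), and $j$ consents because $u_j(G+ij)-u_j(G)=H(2)-H(0)>0$. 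So acyclicity is not preserved for either type, and the proof collapses. This is also a genuinely different route from the paper's, which obtains the corollary from the explicit positive-probability path construction inside the proof of Theorem~\ref{th_conv} (steps 1)--4) plus the Borel--Cantelli/stopping-time argument), not from a trajectory-wise invariant.

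The gap also cannot be repaired by arguing that such cycles are transient, because they need not be. With four non-consensual agents $a,i,j,k$ and the selection sequence $(j,i),(a,i),(i,j),(k,j),(j,k),(i,k),(k,i),(a,j),(a,k)$ --- every step strictly improving for the mover, by computations identical to the one above --- the process reaches $G^*=\{ji,ij,ai,kj,jk,ik,ki,aj,ak\}$: the complete component $\{i,j,k\}$ at level $1$ with $a$ at level $0$ pointing to all three. Under the hypothesis $H(\ell+1)-H(\ell)>\gamma+cN$ one checks directly that $G^*$ satisfies Definition~\ref{eq_net}: severing a triangle edge costs its owner $\gamma+\tfrac{2}{3}c$, severing one of $a$'s edges costs $\gamma$, and adding any backward edge toward $a$ costs at least $H(1)-H(0)-\gamma-\tfrac{1}{3}c>0$. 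So $G^*$ is an equilibrium network, hence absorbing by Proposition~\ref{prop1}, it is reached from $\emptyset$ with positive probability, and it contains undirected edges. This means the difficulty is not only in your argument: the corollary's ``with probability one, only single edges'' conclusion itself fails for $n\geq 4$ non-consensual agents, and the paper's justification (``follows from the proof construction of Theorem~\ref{th_conv}'') has the same loophole, since exhibiting one positive-probability path to a single-edge equilibrium does not exclude positive-probability paths into equilibria like $G^*$ --- exactly the path-dependence phenomenon the paper itself acknowledges in Remark~\ref{an-res}. Any correct proof would have to rule out profitable reciprocation of a subordinate's link, i.e., control the quantity $\gamma+c$ that your accounting missed, and that appears impossible under the stated hypotheses alone.
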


\begin{remark}[Analysis of results]
\label{an-res}
There are important remarks to make about the convergence results for consensual agents. 
In general, the final or converged network from the network formation process (for any type of agents) 
depends on both the initial network and the specific realization path of the underlying stochastic process, i.e., the final network is a random function of the initial conditions. This is known as~\emph{path dependence}~\cite{VB-SG:00}: depending on the initial graph, there may be a positive probability of converging to more than one invariant network, provided this is allowed by the utility parameters and the topology of the initial graph. This gives the intuition that in egalitarian societies, the topology of the hierarchies that may emerge have a random component in it. See Example 2 for an illustration of this phenomenon. 
%
Finally, notice how starting the network formation process from an empty graph greatly reduces the final topology of the possible converged equilibrium network (e.g., compare Corollary~\ref{col_conv} to Theorem~\ref{th_conv}).
\end{remark}

We provide a small example to illustrate the observations made in Remark~\ref{an-res} regarding path dependence.
\\

\textbf{\emph{Example 2: (Simple examples of network formation processes and path dependence) }}Consider the network formation process starting from an empty graph with three agents, i.e., $N=\{1,2,3\}$ for some game.

\noindent
Consider first consensual agents in a game with utility parameters such that $H(1)-H(0)=H(2)-H(1)>\gamma+c$, and consider $(a,b,c)$ to be any permutation of elements of $N$. Then, if the first three selected pairs are $\{(a,b),(a,c),(c,b)\}$, then the process converges to the invariant network $G_1(a,b,c)=\{ab,ac,cb\}$ with probability one. However, if the first two selected pairs are $\{(a,b),(b,c)\}$, then the process converges to the invariant network $G_2(a,b,c)=\{ab,bc\}$ with probability one. We conclude that, when starting from an empty graph, the equilibrium networks $G_1$ and $G_2$ (for different values of $(a,b,c)$) have a positive probability of being the invariant network towards which the network converges to, i.e., there is path dependence.

\noindent
Now, consider non-consensual agents. If the utility parameters of the game are such that $0<H(1)-H(0)<\gamma+c$, then the converged network is a complete network with probability one, i.e., no matter what is the realization of the stochastic process, the network always converges to a complete one. 
Now, assume $H(1)-H(0)>\gamma+c$. First consider $H(2)-H(1)>\gamma+c$. Then, the converged network has a structure of the form $G=\{ab,ac,bc\}$, where any possible permutation $(a,b,c)$ of elements of $N$ has a positive probability of appearing. Similarly, if $H(2)-H(1)<\gamma+c$, then the converged network has a structure of the form $G=\{ab,bc,cb,ac\}$ where any possible permutation $(a,b,c)$ of elements of $N$ has a positive probability of appearing. In conclusion, if $H(1)-H(0)<\gamma+c$ then there is path dependency; otherwise, if $H(1)-H(0)>\gamma+c$ then there is no path dependence.
%
%


\section{Conclusion}

In this paper we propose, to the best of our knowledge, the first network formation game model that explains the emergence of hierarchical structures in a society. We consider both non-consensual and consensual agents, and study the different topologies of hierarchies that can emerge as solutions of a game. Moreover, we develop a network formation process in which the agents play stochastically the game sequentially and myopically, and characterize the possible final formed networks. We learn that the introduction of hierarchical rewards that increasingly benefit the agents of higher hierarchy or power, is crucial in determining both the total number of levels in the final hierarchical structure and the maximum number of individuals that can share the same level. Moreover, we learn that it is plausible that individuals within the same level cooperate with each other.

Our paper motivates future work on the analysis of hierarchical structures, like, for example, the proposal of new models that can display either a narrower or a wider variety of other forms of hierarchies than the ones presented in this paper. One way to possibly achieve this, is by the judicious incorporation of heterogeneous agents in the game that may have different structures of subordinate-collaborative or hierarchical rewards. Another possible future work is to define different dynamic processes of network formation that perhaps can drive the agents to different equilibrium selection outcomes and study why such phenomena may happen. 

As pointed out in Remark~\ref{interpret}, a direction of future research is to extend our proposed game to allow agents to severe incoming connections from her subordinates. 
Moreover, a second research direction is to study the effect of stability of the equilibrium networks of the game, and further study the conditions that could lead more or less efficient equilibrium networks in terms of the game parameters. Finally, from an organizational point of view, it is relevant to theoretically understand under what conditions are flat or steep hierarchies preferable in terms of the game's parameters.

\bibliographystyle{plainurl}
\bibliography{alias,Main,FB,Excl_this}

\appendix
\section{Basic graph theoretical definitions}
\label{app-basgraph}
Consider a directed graph (digraph) $G$. 
%
Graph $G$ is \emph{strongly connected} whenever there exist a path between any two of its nodes; and, by convention, any graph composed by a single node is in itself strongly connected. We say that $i$ and $j$ are \emph{weakly connected} when, if we consider all edges of $G$ being undirected, there exists a path from one of these nodes to the other, and we say that $G$ is \emph{(weakly) connected} if any two nodes are (weakly) connected. 
If $G$ is not connected, then any connected subgraph 
such that no other node of $G$ is connected to any of its nodes, is called an~\emph{isolated network}. 
%

A \emph{(strongly) connected component} of $G$ is a subgraph of $G$ that is strongly connected and such that no additional edges or nodes from $G$ can be included in the subgraph without violating its property of being strongly connected. We can partition the nodes of $G$ according to which connected component they belong to. The \emph{condensation graph} of $G$ is the graph resulting from contracting every strongly connected component of $G$ into a single node, and it is an acyclic graph. 
%
%
%

\section{Proofs for the Static Analysis section}
\label{proofs-results}

We first introduce a couple of useful technical results.

\begin{lemma}
\label{l1}
For both non-consensual and consensual agents, an equilibrium network cannot have directed cycles, and thus, only has undirected connected components. In particular, for the case of non-consensual agents, all connected components are complete subgraphs. 
\end{lemma}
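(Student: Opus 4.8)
The plan is to establish the three assertions in sequence, with the first doing most of the work. I would first prove the sharper statement that \emph{no single edge can have both of its endpoints in the same connected component}; the absence of directed cycles and the undirectedness of components then follow immediately, and completeness for non-consensual agents is a short variation of the same deviation argument.

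For the first step, suppose toward a contradiction that $G$ is an equilibrium containing a single edge $ab\in G$ (so $ba\notin G$) with $a$ and $b$ in the same connected component. I would analyze the deviation in which $b$ adds the reverse edge $ba$, turning the single edge into an undirected (collaborative) one. The key observation is that since $a$ and $b$ already lie in the same strongly connected component, adding an internal edge neither merges nor splits components, so the condensation $C(G)$ and hence \emph{every} level $\ell_\cdot(C(G))$ and every hierarchical reward are unchanged. Consequently $u_b$ changes only through the first and third terms: $b$ acquires exactly one new outgoing edge, contributing $+\gamma$, while $a$ leaves $\Nins(b,G)$ (it becomes a collaborator rather than a single-edge subordinate), so the cost term $c/|P_b(G,a)|$ disappears and every remaining cost term $c/|P_b(\cdot,k)|$ can only shrink as $a$ joins $b$'s collaborators. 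Thus $u_b(G+ba)\ge u_b(G)+\gamma>u_b(G)$. For non-consensual agents this already violates the equilibrium condition. For consensual agents I would additionally verify that $a$ consents, i.e.\ $u_a(G+ba)\ge u_a(G)$: the outgoing edges and level of $a$ are untouched, and the only possible effect on $a$'s management cost is that $b$ now belongs to $a$'s collaborators, which can only enlarge the sets $P_a(\cdot,k)$ and hence weakly lowers the cost. So $a$ weakly benefits, the link is formed, and the equilibrium is again contradicted.

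The remaining two claims are corollaries of this. A directed cycle contains, by definition, a single edge, and all of its nodes lie in one connected component; that single edge therefore violates the first step, so equilibria have no directed cycles. Equivalently, every single edge must join two distinct components, so no component contains a single edge and each one is an undirected connected component. For non-consensual agents, to get completeness I would fix a component $S$ and suppose two of its nodes $i,j$ are not joined by an undirected edge; by the previous claim $S$ has no single edges, so in fact $ij,ji\notin G$. Letting $i$ add the outgoing edge $ij$, the condensation and all levels are again unchanged (both lie in $S$); since $ji\notin G$ the new edge is single, so $j$ does not become a collaborator of $i$, the set $\Nins(i,G)$ and every $P_i(\cdot,k)$ are unchanged, and $i$ simply gains $+\gamma$. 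Hence $u_i(G+ij)=u_i(G)+\gamma>u_i(G)$, contradicting the non-consensual equilibrium condition, so $S$ must be complete.

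The point requiring the most care --- and the crux of the whole argument --- is the invariance of the condensation graph, and therefore of every agent's level and reward $H$, under each of these single-edge modifications; this is precisely what collapses the sign of each utility change to the elementary $+\gamma$ together with a weakly favorable change in management cost. It holds because in all three deviations the two endpoints already share a common connected component. A secondary subtlety to track is the asymmetry of the consensual consent rule (the agent adding the edge versus the one whose approval is needed): it is exactly this asymmetry that confines the completeness conclusion to non-consensual agents, since a prospective supervisor $j$ asked to accept a new subordinate $i$ would incur a fresh management cost and would refuse.
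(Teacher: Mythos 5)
Your proof is correct and follows essentially the same route as the paper's: the same three deviations (the supervisor on a single edge adding the reverse edge, the target's weak consent via growing collaborator sets $P_\cdot(\cdot,k)$, and a node adding a single edge inside its component for the non-consensual completeness claim), with the same utility computations. The only difference is organizational --- you phrase the key step as ``no single edge within a strongly connected component'' while the paper phrases it as ``no single edge on a directed cycle'' --- but these are equivalent and the argument is the same.
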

\begin{proof}
By contradiction, consider an equilibrium network $G$ and a single edge $ij\in G$ 
such that a directed cycle involving the node $i$ exists with the edge $ij$ belonging to such cycle. 
We analyze whether $j$ has an incentive to start a collaboration with $i$. Observe that $H(\ell_j(G))=H(\ell_j(G+ji))$, $|P_j(G,k)|\leq |P_j(G+ji,k)|$ for any $k\in\Nins(j,G)$ with $k\neq i$. 
Then,
\begin{align*}
u_j(G+ji)-u_j(G)&=\left(\gamma(|\Nout(j,G)|+1)+H(\ell_j(G+ji))-c\sum_{k\in\Nins(j,G)\setminus\{i\}}\frac{1}{|P_j(G+ij,k)|}\right)\\
&-\left(\gamma|\Nout(j,G)|+H(\ell_j(G))-c\sum_{k\in\Nins(j,G)}\frac{1}{|P_j(G,k)|}\right)\\
&=\gamma+c\left(\sum_{k\in\Nins(j,G)}\frac{1}{|P_j(G,k)|}-\sum_{k\in\Nins(j,G)\setminus\{i\}}\frac{1}{|P_j(G+ij,k)|}\right)\\
&>0.
\end{align*}
Thus, $j$ has an incentive to establish the edge $ji$. For non-consensual agents, this would contradict the fact that $G$ is an equilibrium network, and so $G$ cannot have single edges in any connected component, which implies that all connected components are undirected subgraphs. Now, assume that agents are consensual and continue the analysis to see if $i$ has an incentive in accepting a link from $j$. We can easily show that 
$$u_i(G+ji)-u_i(G)=c\sum_{k\in\Nins(i,G)}\left(\frac{1}{|P_i(G,k)|}-\frac{1}{|P_i(G+ji,k)|}\right)\geq 0$$ 
since $|P_i(G+ij,k)|\geq |P_i(G,k)|$ for any $k\in\Nins(i,G)$. 
Thus, $i$ will accept collaborating with $j$, and this immediately implies that $G$ must only have undirected connected components for consensual agents.

Now, assume we have any $i,j\in N$ with $ij,ji\notin G$ belonging to the same undirected connected component of $G$ and that this connected component is not a complete subgraph. Then, it is easy to show that $u_i(G+ij)-u_i(G)=\gamma>0$, and so $i$ has an incentive to form the connection $ij$. For non-consensual agents, this contradicts $G$ being an equilibrium network, from which we immediately conclude that all connected components must be complete subgraphs. Now, consider consensual agents. Then, it is easy to show that $u_j(G+ij)-u_j(G)=-c\frac{1}{|P_j(G,i)|}<0$, and so $j$ has no incentive in accepting a single edge from $i$. Thus, the connected components of $G$ are not necessarily complete for consensual agents. 
\end{proof}
\begin{corollary}
\label{l2}
Consider any graph $G$ without directed cycles. Then, no agent has an incentive to sever any single edge.
\end{corollary}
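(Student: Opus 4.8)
The plan is to compute directly the change in $i$'s utility caused by severing a single edge and to show it equals exactly $-\gamma<0$, so that severing is never profitable. Fix a single edge $ij\in G$. Since an agent only controls her outgoing edges, the tail $i$ is the only agent that could sever $ij$, so it suffices to prove $u_i(G-ij)<u_i(G)$.

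First I would use the acyclicity hypothesis to locate $i$ and $j$ in the condensation graph. If $i$ and $j$ belonged to the same strongly connected component, there would be a path from $j$ back to $i$; concatenating it with $ij$ yields a cycle through the single edge $ij$, i.e.\ a directed cycle, contradicting the hypothesis. Hence $\Co(i,G)\neq\Co(j,G)$ and $ij$ is an inter-component arc. Consequently its removal neither merges nor splits any strongly connected component, so $C(G-ij)$ has the same node set as $C(G)$ and differs at most by the single arc $\Co(i,G)\to\Co(j,G)$.

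The key step is to argue that $i$'s level is preserved, $\ell_i(C(G-ij))=\ell_i(C(G))$. Because $C(G)$ is acyclic and the deleted arc leaves $\Co(i,G)$, no directed path ending at $\Co(i,G)$ can traverse it, as it would have to exit and later re-enter $\Co(i,G)$, creating a cycle in the condensation graph. Thus deleting the arc removes no source-to-$\Co(i,G)$ path; the only node whose in-degree drops is $\Co(j,G)$, and even if it becomes a new source it still admits no path to $\Co(i,G)$. Hence the longest path from any source to $\Co(i,G)$ is unchanged, so the level, and therefore the hierarchical-reward term $H(\ell_i(\cdot))$, is identical in $G$ and $G-ij$. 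I expect this short condensation-graph argument to be the main obstacle; everything else is bookkeeping, and note in particular that no monotonicity of $H$ is needed because the two reward terms cancel exactly.

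Finally I would check that the management-cost term is untouched. Since $ij$ is a single edge we have $ji\notin G$, so $j\notin\Nin(i,G)$ and $\Nins(i,G-ij)=\Nins(i,G)$; moreover, for each $k\in\Nins(i,G)$ the set $P_i(G,k)$ is unaffected, because the only membership that deletion of $ij$ could alter is that of $l=j$, which never lies in $P_i(G,k)$ as it would require $ji\in G$. With the out-degree dropping by exactly one and both the level and the management cost fixed, we obtain $u_i(G)-u_i(G-ij)=\gamma>0$. As $ij$ was an arbitrary single edge, no agent has an incentive to sever any single edge.
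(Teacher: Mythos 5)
Your proof is correct and takes essentially the same route as the paper's: the paper's one-line proof simply asserts that $u_i(G-ij)-u_i(G)=-\gamma<0$, which is exactly the computation you carry out. Your condensation-graph argument for why the level $\ell_i$ (hence the hierarchical reward) and the management cost are unchanged merely supplies the bookkeeping that the paper leaves to the reader under the word ``observe.''
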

\begin{proof}
Given the graph $G$, consider any $i,j\in N$ such that $ij\in G$ and $ji\notin G$; and observe that $u_i(G-ij)-u_i(G)=-\gamma<0$. 
\end{proof}

Now, we introduce the important proofs of our Solution Analysis section.

\begin{proof}[Proof of Theorem~\ref{th:non-c}]
Throughout the proof, let $T_i(G):=|\Nout(i,G)|\gamma$.  We denote by $i\to
j$ the fact that there exists a path between $i$ and $j$, and, with a
slight abuse of notation, we will also use it to denote the set of edges
that compose some path between $i$ and $j$, so that clearly $i\to j \subset
G$.\footnote{Note that there might be multiple paths between $i$ and $j$
  and so there is some ambiguity as to which specific path $i\to j$ refers
  to, but this will not be a problem throughout the paper.}

\textbf{Part I: }We first prove that any given sequential hierarchy as stipulated in items~\ref{cond-1} and~\ref{cond-2} of the theorem statement, is the equilibrium network for some game $\mathcal{G}$. Consider a given sequential hierarchy $G$. 
Now, we choose any $c,\gamma>0$ and fix their values. Then, the remaining problem is to find a reward function $H$ so that $G$ is an equilibrium network for the game $\mathcal{G}=(N,u,\mathcal{A},\gamma,H,c)$. Then, for $G$ to be an equilibrium network, we need to ensure that no agent is able to establish new links or severe existing ones as pointed out in Definition~\ref{eq_net}. Let $k$ be the number of levels  of $G$.

If $k=1$, then the network is complete with $u_i(G-ij)=\gamma(n-2)+H(0)-c/(n-1)$ and $u_i(G)=\gamma(n-1)+H(0)$ for any $i,j\in N$; which gives $u_i(G-ij)-u_i(G)=-\gamma-c/(n-1)<0$ since $n\geq 3$. Thus, there is no incentive for any agent of the network to severe a link, irrespective of the function $H$. Thus, any non-negative monotonic function $H$ can be chosen as the reward function.

Let us assume now that we have $k>1$. 
%

\emph{Case 1: }Assume $i$ has level $0<\ell\leq k-1$ and that she considers establishing a link with some agent $j$ that has level $\ell-1$ (recall that $ij\in G$), which we call an \emph{immediate backward connection}. Then, $u_i(G)=T_i(G)+H(\ell)-c\frac{|\setdef{j}{\ell_j(C(G))<\ell}|}{|P_i(G)|}$ and $u_i(G+ij)=T_i(G)+\gamma + H(\ell-1)-c\frac{|\setdef{j}{\ell_j(C(G))<\ell}|-1}{|P_i(G)|+1}$, and thus 
\begin{equation*}
u_i(G+ij)-u_i(G)= H(\ell-1)-H(\ell)+\gamma+c\left(\frac{|P_i(G)|+|\setdef{j}{\ell_j(C(G))<\ell}|}{|P_i(G)|(|P_i(G)|+1)}\right).
\end{equation*}
%
%
%
Having this previous expression being less than zero would suffice for a non-consensual agent to not have an incentive to establish immediate backward links, which is the same as choosing a function $H$ that satisfies condition~\ref{cond-1} of the theorem statement (observe that $S$ must be, indeed, strictly increasing since the sequence $\{|\setdef{j}{\ell_j(C(G))<\ell}|\}_{\ell=1}^{k-1}$ is also strictly increasing for $k>2$).

\emph{Case 2: }Assume $i$ has level $0<\ell\leq k-1$ and that she considers establishing a connection with an agent $j$ that has level $\ell-m$ for $m>1$ (assuming, obviously, that such $j$ exists). 
Let $q_\ell(G)=|\setdef{j}{\ell_j(C(G))=\ell}|$ be the number of agents that have rank $\ell$ in the graph $G$. 
Then, $u_i(G)=T_i(G)+H(\ell)-c\frac{|\setdef{j}{\ell_j(C(G))<\ell}|}{q_\ell(G)}$ and $u_i(G+ij)=T_i(G)+\gamma + H(\ell-m)-c\left(\frac{|\setdef{j}{m<\ell_j(C(G))<\ell}|}{q_\ell(G)}+ \frac{|\setdef{j}{\ell_j(C(G))\leq m}|}{q_\ell(G)+1} \right)$; and thus 
\begin{align*}
u_i(G+ij)-u_i(G)&=-(H(\ell)-H(\ell-m))+\gamma+c\frac{|\setdef{j}{\ell_j(C(G))\leq m}|}{q_\ell(G)(q_\ell(G)+1)}\\
=&-\sum_{p=0}^{m-1} (H(\ell-p)-H(\ell-(p+1)))+\gamma+c\frac{|\setdef{j}{\ell_j(C(G))\leq m}|}{q_\ell(G)(q_\ell(G)+1)}.
\end{align*}
%
Now, if~\eqref{cond_im} holds, then $-(H(\ell-p)-H(\ell-(p+1))<-\gamma-c\left(\frac{q_{\ell-p}(G)+|\setdef{j}{\ell_j(C(G))<\ell-p}|}{q_{\ell-p}(G)(q_{\ell-p}(G)+1)}\right)$ for any $0\leq p\leq \ell-1$; from which it follows that
\begin{align*}
u_i(G+ij)-u_i(G)&=-(H(\ell)-H(\ell-1))-\sum_{p=1}^{m-1} (H(\ell-p)-H(\ell-(p+1)))+\gamma+c\frac{|\setdef{j}{\ell_j(C(G))\leq m}|}{q_\ell(G)(q_\ell(G)+1)}\\
&<-\gamma-c\frac{q_{\ell}(G)+|\setdef{j}{\ell_j(C(G))<\ell}|}{q_\ell(G)(q_\ell(G)+1)}-\sum_{p=1}^{m-1} (H(\ell-p)-H(\ell-(p+1)))+\gamma\\
&\quad+c\frac{|\setdef{j}{\ell_j(C(G))\leq m}|}{q_\ell(G)(q_\ell(G)+1)}\\
&<-c\frac{1}{q_\ell(G)+1}-\sum_{p=1}^{m-1} (H(\ell-p)-H(\ell-(p+1)))\\
&<0;
\end{align*}
%
%
and so $i$ has no incentive to establish such new link. In conclusion, if condition~\ref{cond-1} from the theorem statement holds, then no agent has in incentive to establish a link with any agent that has a lower level.

\emph{Case 3: } Assume $i$ has any level $0\leq \ell \leq k-1$ and that she considers severing any of her existing edges. From Corollary~\ref{l2}, we know that she has no incentive to sever any single edge. Now, consider any agent $j$ such that $ij\in G$ belongs to some undirected edge. If the undirected edge is a critical edge, i.e., $|P_{i}(G)|=2$, then it is easy to show that 
$u_i(G-ij)-u_i(G)=-\gamma+H(\ell+1)-H(\ell)-c-c\frac{|\setdef{j}{\ell_j(C(G))<\ell}|}{2}$. Having this previous expression being less than zero would suffice for $i$ to not have an incentive to severe the edge $ij$, which is the same as choosing a function $H$ that satisfies condition~\ref{cond-2} from the theorem statement. On the other hand, if the undirected edge is not a critical edge, so that $|P_i(G)|>2$, then it is easy to show that $u_i(G-ij)-u_i(G)=-\gamma-c\frac{|\setdef{j}{\ell_j(C(G))<\ell}|+|P_i(G)|}{|P_i(G)|(|P_i(G)|-1)}<0$; and thus, $i$ has no incentive in severing the edge $ij$.

In conclusion, from all the previous analyzed cases, if $G$ is a sequential hierarchy with more than one level, then we only need to define the non-negative 
monotonic function $H$ as stipulated by the conditions~\ref{cond-1} and~\ref{cond-2} of the theorem statement in order to finally obtain the sought game $\mathcal{G}$ such that $G$ is an equilibrium network for $\mathcal{G}$.
%
%
%

\textbf{Part II: }Now, we need to prove that any equilibrium network is a sequential hierarchy as stipulated in items~\ref{cond-1} and~\ref{cond-2} of the theorem statement. Consider the given equilibrium network $G$. From Lemma~\ref{l1} we immediately conclude that $G$ is a hierarchical structure with complete connected components. Note that this lemma does not specify that $G$ must be connected. 
However, from the proof of the same lemma, if we assume $G$ is not connected, then there is always an incentive for some agent to form a single edge with an agent from another complete connected component of the same level, which by contradiction implies that there is only one connected component per any level in $G$. 
Now, we claim that any agent in $G$ must have single edges towards any other agent with a higher rank than her. Assume this is not the case and take any $i,j\in N$ such that $ij\notin G$ and $\ell_i(C(G))<\ell_j(C(G))$, 
then it follows easily that $i$ has an incentive to form links with $j$ since such action would increase her utility by $\gamma>0$, giving a contradiction. Thus, in observance of Definition~\ref{def1}, we have proved that $G$ is a sequential hierarchy. It should be noted that, from our previous analysis in Part I, the number of agents in each complete connected component of $G$ (when there is more than one level) is determined by the utility parameters $(\gamma,H,c)$ satisfying the conditions~\ref{cond-1} and~\ref{cond-2} of the theorem statement. This concludes the proof that any equilibrium network is a sequential hierarchy as stipulated in items~\ref{cond-1} and~\ref{cond-2} of the theorem statement.

Finally, assume we have a sequential hierarchy $G$ such that there exists some level $\ell>1$ such that: 1) for any $i\in N$ that has level $\ell$, we have $|\setdef{j}{\ell_j(C(G))<\ell}|\geq \frac{2|P_i(G)|^2}{2+|P_i(G)|+|P_i(G)|^2}$; and 2) for any $p\in N$ that has level $\ell-1$, we have $|P_p(G)|=2$. Then, we claim that $G$ 
cannot be an equilibrium network for any game. To see this, observe condition 1) implies $\left(\frac{|P_i(G)|+|\setdef{j}{\ell_j(C(G))<\ell}|}{|P_i(G)|(|P_i(G)|+1)}\right)\geq 1+\frac{|\setdef{j}{\ell_j(C(G))<\ell}|}{2}$, which immediately implies both~\eqref{cond_im} and~\eqref{cond_2} cannot hold simultaneously because of condition 2), for any utility parameters $(\gamma,H,c)$.
\end{proof}

\begin{proof}[Proof of Theorem~\ref{th:c}]
Throughout the proof, let $T_i(G):=|\Nout(i,G)|\gamma$. We adopt the same notation introduced at the beginning of the proof of Theorem~\ref{th:non-c}, and we additionally introduce the following notation: given nodes $i,j\in N$, let $i\notto j$ denote the fact that there does not exist any path between $i$ and $j$ in $G$, and let $i-j$ denote that $i$ and $j$ are (weakly) connected, i.e., $i\to j$ or $j \to i$. 

\textbf{Part I: }We first analyze the conditions under which any given hierarchical structure, as in items~\ref{lab_1},~\ref{lab_2} and~\ref{cond-2a} of the theorem statement,  
is the equilibrium network for some game $\mathcal{G}$. Consider a given network $G$ that has one or more hierarchical structures. 
The remaining problem is to find the utility parameters $(\gamma,H,c)$ so that $G$ is an equilibrium network for the game $\mathcal{G}=(N,u,\mathcal{A},\gamma,H,c)$. 
%
Now, we analyze the conditions that ensure no agent has an incentive and/or is able to form a new link or severe any existing link, so that $G$ can be properly defined as an equilibrium network of $\mathcal{G}$.

\emph{Case 1:} Assume any agent $i$ that has level $0<\ell\leq k-1$, with $k>1$ being the number of levels in the hierarchical structure for which $i$ belongs to, is considering establishing a link with any agent $j$ that has rank $\ell-m$ for any appropriate $m\geq 1$, i.e., establish a backward connection. First, assume $ji\in G$. 
Then, following a similar procedure as for Case 1 of Part I in the proof of Theorem~\ref{th:non-c}, we obtain $u_i(G+ij)-u_i(G)=\gamma+H(\ell-m)-H(\ell)+c\sum_{p\in\Nins(i,G)\setminus\{j\}}\left(\frac{1}{|P_i(G,p)|}-\frac{1}{|P_i(G+ij,p)|}\right)+\frac{c}{|P_i(G,j)|}$ with $|P_i(G+ij,p)|\geq |P_i(G,p)|$ for any $p\in\Nins(i,G)\setminus\{j\}$, so that $i$ does not have an incentive to establish a link with $j$ if the inequality~\eqref{cond_im1} from condition~\ref{lab_1} of the theorem statement is satisfied. Assume $i$ has an incentive to establish this link with $j$, then since agents are consensual, we need to analyze if $j$ would consent to such action from $i$. Then, it is easy to show that  
$u_j(G+ij)-u_j(G)=c\sum_{p\in\Nins(j,G)}\left(\frac{1}{|P_j(G,p)|}-\frac{1}{|P_j(G+ij,p)|}\right)\geq  0$ since $|P_i(G+ij,p)|\geq |P_i(G,p)|$ for any $p\in\Nins(j,G)$, and thus $j$ would accept such proposition. Then, the sufficient and necessary condition for not incentivizing a backward connection from $i$ is the satisfaction of condition~\ref{lab_1} 
from the theorem statement. 
Second, assume $ji\notin G$ and $j\to i$. 
Then, it is easy to see that $u_i(G+ij)-u_i(G)=\gamma+H(\ell-m)-H(\ell)$. Assume the worst case, i.e., 
$i$ has an incentive in making such connection. Then, we find that $u_j(G+ij)-u_j(G)=-c<0$, and so $j$ will not accept a new link from $i$. 
Third, assume $ji\notin G$ and $j\notto i$ 
(notice that $i$ and $j$ may or may not belong to the same hierarchical structure). Then, it is easy to show that $i$ has an incentive in establishing such link, and that for $j$ we have $u_j(G+ij)-u_j(G)=H(\ell+1)-H(\ell-m)-c$, and so $j$ will not accept such connection if and only if 
\begin{equation}
\label{la}
H(\ell+1)-H(\ell-m)<c.
\end{equation}

%

\emph{Case 2: }Assume any agent $i$ that has level $0\leq \ell< k-1$, with $k>1$ being the number of levels in the hierarchical structure for which $i$ belongs to, considers making a forward connection with any agent $j$ that has rank $\ell+m$ for any appropriate $m\geq 1$. We already know from Part I in the proof of Theorem~\ref{th:non-c} that $i$ has an incentive to establish a link with $j$. 
If $m=1$ and $i\to j$, then it follows that $u_j(G+ij)-u_j(G)=-\frac{c}{|P_j(G,i)|}<0$. If $m=1$ and $i\notto j$, or $m>1$ (whether $i-j$ or not), then $u_j(G+ij)-u_j(G)=-c<0$. 
Then, we conclude that $j$ has no incentive in accepting a connection from $i$. 
%

\emph{Case 3:} Assume any agent $i$ that has level $0\leq \ell\leq k-1$, with $k\geq 1$ being the number of levels in the hierarchical structure for which $i$ belongs to, considers making a connection with any agent $j$ of rank $\ell$ (so that $ij,ji\notin G$). Then, we already know $i$ has an incentive for making such connection. If $j\in\Co(i,G)$, then we have $u_j(G+ij)-u_j(G)=-\frac{c}{|P_j(G,i)|}<0$ and thus $j$ will not accept such link from $i$. 
If $j\notin\Co(i,G)$ (whether $i$ belongs or not to the same hierarchical structure as $j$), then we obtain 
$u_j(G+ij)-u_j(G)=H(\ell+1)-H(\ell)-c$ and so $j$ has no incentive in accepting such link from $i$ if and only if 
\begin{equation}
\label{lc}
H(\ell+1)-H(\ell)<c.
\end{equation}

\emph{Case 4: }
Assume any agent $i$ that has any rank $\ell$ considers severing any of her existing edges. From Corollary~\ref{l2}, we know that she has no incentive to sever any single edge. Now, consider any agent $j$ such that $ij\in G$ belongs to some undirected edge. If the undirected edge is a critical edge and $i$ can increase its level by severing the edge $ij$, then it is easy to show that 
$u_i(G-ij)-u_i(G)=-\gamma+H(\ell+1)-H(\ell)-c\sum_{p\in\Nins(i,G)}\left(\frac{1}{|P_i(G-ij,p)|}-\frac{1}{|P_i(G,p)|}\right)-c$. Then, the condition $H(\ell+1)-H(\ell)<\gamma+c\sum_{p\in\Nins(i,G)}\left(\frac{1}{|P_i(G-ij,p)|}-\frac{1}{|P_i(G,p)|}\right)+c$ would suffice for $i$ to not have an incentive to sever the edge $ij$, which is expressed in equation~\eqref{cond_2a} from condition~\ref{cond-2a} of the theorem statement. On the other hand, if the undirected edge is not a critical edge, then $|\Co(i,G)|> 2$ and it is easy to show that $u_i(G-ij)-u_i(G)=-\gamma-c\sum_{p\in\Nins(i,G)}\left(\frac{1}{|P_i(G-ij,p)|}-\frac{1}{|P_i(G,p)|}\right)-c<0$ since $|P_i(G-ij,p)|\leq|P_i(G,p)|$ for any $p\in\Nins(i,G)$; and thus, $i$ has no incentive in severing the edge $ij$.

Notice that enforcing inequalities~\eqref{la} and \eqref{lc} amounts to the condition~\ref{lab_2} of the theorem statement. Therefore, if there exist utility parameters $(\gamma,H,c)$ that satisfy the conditions~\ref{lab_1},~\ref{lab_2} and \ref{cond-2a} from the theorem statement, we conclude that the given sequential hierarchy is an equilibrium network. 

\textbf{Part II: }Now, consider a given equilibrium network $G$. From Lemma~\ref{l1} we immediately conclude that $G$ is a hierarchical structure or is composed of isolated hierarchical structures (since this lemma does not specify whether $G$ is connected or not). 
%
Then, from our previous analysis in Part I, it follows that the utility parameters $(\gamma,H,c)$ must be such that the conditions~\ref{lab_1},~\ref{lab_2} and~\ref{cond-2a} in the theorem statement are satisfied. This concludes the proof that any equilibrium network is composed by one or multiple hierarchical structures. 
%
%
%

Finally, we prove the existence of hierarchical structures that are not equilibrium networks of any possible hierarchical network formation game. For the case in which the hierarchical structures are sequential hierarchies, we can use the same construction given in the proof of Theorem~\ref{th:non-c}. Now, we focus on the case in which the hierarchical structures are not sequential hierarchies. Consider a hierarchical structure $G$ with more than one component per any level and with complete connected components, and such that the only single edges in the graph are as follows: for any $i,j\in N$ with $\ell_j(C(G))=\ell_j(C(G))-1\geq 0$, it holds 
$pk\in G$ (and so $kp\notin G$) for any $p\in\Co(i,G)$, $k\in\Co(j,G)$. Moreover, also assume $|P_i(G)|>2$ for any $i\in N$, i.e., there are no critical edges in $G$. 
We immediately notice from the topology of $G$ and our discussion above that the expression in equation~\eqref{cond_im1} from condition~\ref{lab_1} of the theorem statement becomes 
%
%
\begin{equation}
\label{simple}
H(\ell)-H(\ell-1)>\gamma+c\left(\frac{|P_i(G)|+|P_j(G)|}{|P_i(G)|(|P_i(G)|+1)}\right)
\end{equation}
for any $i,j\in N$ with $\ell_i(C(G))=\ell$ and $\ell_j(C(G))=\ell-1$, $\ell>0$.  
Assume $G$ has more than one hierarchical structure, and that one of them has $\ell+1>0$ levels with $i_p\in N$ denoting any agent of such hierarchical structure such that $\ell_i(C(G))=p\in[0,\ell-1]$. Consider any level $0<\ell\leq k-1$. Now, observe that condition~\ref{lab_1} (using~\eqref{simple}) and condition~\ref{lab_2} imply 
\begin{align}
H(\ell)-H(0)&>\ell\gamma+c\sum_{p=1}^\ell\left(\frac{|P_{i_p}(G)|+|P_{i_{p-1}}(G)|}{|P_{i_p}(G)|(|P_{i_p}(G)|+1)}\right),\label{aa1}\\
H(\ell)-H(0)&<c,\label{aa2}
\end{align}
respectively. 
We claim that if there exists some level $\ell$ such that $1\leq \sum_{p=1}^\ell\left(\frac{|P_{i_p}(G)|+|P_{i_{p-1}}(G)|}{|P_{i_p}(G)|(|P_{i_p}(G)|+1)}\right)$, 
then there are no utility parameters that can define a game such that $G$ is its equilibrium network. To see this, observe that this condition implies $c<\ell\gamma+c\sum_{p=1}^\ell\left(\frac{|P_{i_p}(G)|+|P_{i_{p-1}}(G)|}{|P_{i_p}(G)|(|P_{i_p}(G)|+1)}\right)$ for any $\gamma,c>0$, which makes it impossible to choose a reward function $H$ such that conditions~\eqref{aa1} and~\eqref{aa2} are satisfied, which in turn violates both conditions~\ref{lab_1} and~\ref{lab_2} from the theorem. 
\end{proof}

\begin{proof}[Proof of Proposition~\ref{lem-ex}]
We analyze the conditions under which any given network as in~\ref{s11} or~\ref{s22} is the equilibrium network for some game $\mathcal{G}$ with consensual agents. Given the network, the remaining problem is to find the appropriate utility parameters $(\gamma,H,c)$ so that it becomes the equilibrium network for the game $\mathcal{G}=(N,u,\mathcal{A},\gamma,H,c)$. 

Consider $G$ to be a given hierarchical structure as in statement~\ref{s11} with $k>1$ levels. First, observe that since all connected components are complete subgraphs with more than two nodes, then there are no critical edges in $G$, so that we do not need to analyze the satisfaction of condition~\ref{cond-2a} of Theorem~\ref{th:c}. 
%
%
We immediately notice from the structure of $G$ and the proof of Theorem~\ref{th:c} that equation~\eqref{cond_im2} of condition~\ref{lab_2} from Theorem~\ref{th:c} becomes 
\begin{equation}
\label{simple2}
H(\ell)-H(\ell-1)>\gamma+c\left(\frac{|P_i(G)|+|P_j(G)|}{|P_i(G)|(|P_i(G)|+1)}\right)
\end{equation}
with $\ell>0$ and any $i,j\in N$ with $\ell_i(C(G))=\ell>0$ and $\ell_j(C(G))=\ell-1$. Then, it follows immediately that we can first choose any $\gamma,c>0$ and then an appropriate non-negative monotonic function $H$ satisfying equation~\eqref{simple2} to form the sought game. 

Now, consider $G$ to be a given network as in statement~\ref{s22}. 
Observe that all isolated networks in $G$ have more than one level. First, we make the observation that, considering any agent $i,j\in N$ with $\ell_i(C(G))=\ell>0$ and $\ell_j(C(G))=\ell-1$, then, from the conditions of $G$, 
the following inequalities hold: 
\begin{equation}
\label{eq-11}
\frac{|P_i(G)|+|P_j(G)|}{|P_i(G)|(|P_i(G)|+1)}\leq \frac{2}{|P_i(G)|+1}<1,
\end{equation}
since $|P_i(G)|>2$.
%

Using the inequality in~\eqref{simple2}, we conclude that if there exists a tuple of utility parameters $(\gamma,H,c)$ such that they satisfy the condition 
\begin{equation}
\label{reff}
\gamma+c U(G)<H(\ell)-H(\ell-1)< c \,\text{ with }\, U(G)=\max_{\substack{i\,:\,\ell_i(C(G))=\ell\\j\,:\,\ell_j(C(G))=\ell-1}}\left(\frac{|P_i(G)|+|P_j(G)|}{|P_i(G)|(|P_i(G)|+1)}\right),
\end{equation}
then this will imply the satisfaction of conditions~\ref{lab_1} and~\ref{lab_2} from Theorem~\ref{th:c} and thus create a game such that $G$ is its equilibrium network. We claim that such tuple exist. To see this, notice that~\eqref{eq-11} implies 
$cU(G)< c$, 
and so we can choose a sufficiently small $\gamma>0$ so that 
$\gamma+cU(G)< c$, 
and therefore we can squeeze in the difference $H(\ell)-H(\ell-1)>0$ (which immediately defines the reward function $H$ appropriately) so that~\eqref{reff} holds.
\end{proof}

\section{Proofs for the Dynamic Analysis section}
\label{proofs-results1}


\begin{proof}[Proof of Theorem~\ref{th_conv}]
%

For any type of agents, let $\mathcal{E}$ be the set of equilibrium networks. Consider a set of graphs $\mathcal{S}=\{G_1,\dots,G_k\}$ (with, obviously, different elements). We say $\mathcal{S}$ is an \emph{invariant set} in our network formation process if the following holds: for any elements $G_i,G_j\in\mathcal{S}$, if $G(t)\in G_i$ then $G(t')=G_j$ for infinitely many $t'\geq t$ with probability one. Notice that this implies that, if $G_i\in\mathcal{S}$ and $G_k\notin\mathcal{S}$, then there is zero probability that the network formation process starting at $G_i$ can eventually be the network $G_k$. In particular, an invariant set $\mathcal{S}$ has only one element if and only if this element is an invariant network (see Definition~\ref{def_inv_net}). 
Finally, we remark that in the literature of Markov chains, this concept of invariant set defined above is equivalent to the concept of an irreducible closed set of persistent states~\cite{GG-DR:01}.

Now, according to the Decomposition Theorem of Markov Chains, the state space of the Markov chain $\mathcal{M}$ can be partitioned uniquely as the union of a set of transient states (if such set exists)  and one or more invariant sets. Now, from the fact that $G(t)$ can only take a finite number of different topologies, or, equivalently, from the fact that the Markov chain $\mathcal{M}$ is finite, it is known that, with probability one, the network $G(t)$ 
enters an invariant set in finite time (indeed, each graph in this invariant set is a non-null persistent state~\cite{GG-DR:01}). Obviously, if the network formation process starts in an equilibrium network, we know from Proposition~\ref{prop1} that this finite time is actually zero.
%
%
%
In conclusion, from Proposition~\ref{prop1}, if any invariant set $\mathcal{S}$ has only one element, 
then we conclude that 
$G(t)$ converges to an equilibrium network with probability one. 
%
%
%

For the rest of the proof, pick any type of agents for the game. 
Now, given a graph $G$, define the set $\bar{G}$ as follows: $\bar{G}=\setdef{\{i,j\}}{ij\in G\text{ or }ji\in G}$. 

Now, we make a couple of observations:
\begin{enumerate}[label=(\alph*)]
\item Consider a single edge $ij\in G(t)$ that belongs to a directed cycle at any time $t$, and that $i$ has level $\ell$. Then, it is easy to see that $H(\ell+k)$ is the hierarchical reward of $i$ for the graph $G(t)-ij$, for some $k\geq 0$. Now, observe that $u_i(G(t)-ij)-u_i(G(t))=-\gamma+H(\ell+k)-H(\ell)$. If $k=0$, then $G(t+1)=G(t)$; otherwise, if $k>0$, then, from the conditions on the increments of the reward function in the theorem statement, we get that $u_i(G(t)-ij)>u_i(G(t))$ and so $G(t+1)=G(t)-ij$.
\item \label{case-2a}From the proof of Lemma~\ref{l1}, we conclude that: if $ij$ belongs to a directed cycle in $G(t)$, then, with positive probability, $ij,ji\in G(t+1)$.
\end{enumerate}
Let $G(t_o)$ be the graph at any time $t_o\geq 0$. Then, with these two observations, we can construct a finite sequence of pairs of agents such that they modify the network according to the following steps (a step can be skipped if not applicable):
\begin{enumerate}[label=\arabic*)]
\item \label{oink0}eventually makes the graph connected at some time $t_1\geq t_o$, i.e., $G(t_1)$ has no isolated networks (this is easy to see for non-consensual agents, and for consensual agents, it follows from the fact that inequality~\eqref{cond_im2} in item~\ref{lab_2} of Theorem~\ref{th:c} cannot be satisfied under the conditions on the increments of the reward function $H$),
\item eventually makes the graph to have no directed cycles at some time $t_2\geq t_1$,
\item \label{oink01}taking $G(t_2)$, it starts establishing one or multiple single edges that may result in the formation of new directed cycles (if any), and then immediately turn these edges into undirected ones 
according to~\ref{case-2a} (for example, for non-consensual agents, new directed edges may appear when an agent $i$ forms a single edge with another agent $j$ of the same level and such that there is a path from $j$ to $i$); and repeat this process until the graph has no directed cycles and cannot form new directed cycles at some time $t_3\geq t_2$; notice that $G(t_3)$ is a hierarchical structure,
\item\label{oink1} severs any existing critical edge in $G(t_3)$ (i.e., turns it into a single edge), and continues doing this process until no new critical edge is found in the graph at some time $t_4\geq t_3$ (this can always be done since the condition on the increments of the reward function $H$ implies that the inequality~\eqref{cond_2a} is always not satisfied, which, from the proof of Theorem~\ref{th:c}, let us conclude that any existing critical edge can always be severed).
\end{enumerate}
Notice that the graph $G(t_4)$ is still a hierarchical structure, since severing critical edges (as in step~\ref{oink1}) does not introduce directed cycles, nor introduce new isolated networks. 
%
%
Since $G(t_4)$ is a hierarchical structure and the increments of the reward function $H$ are large enough, there is no incentive for any agent to establish an edge with any agent with a lower level. Notice that for non-consensual agents, step~\ref{oink01} implies all agents with the same level form a complete subgraph. Notice also that for consensual agents, if a single edge is established between two agents $i$ and $j$ with the same level in $G(t_4)$, then, for any $t\geq t_4$, this single edge will not be an undirected edge with probability one, nor no new directed cycle that contains $i$ and $j$ will be formed with probability one. 
Therefore, we conclude that, for any type of agents, no new undirected edges nor new directed cycles can be formed for any $G(t)\geq G(t_4)$ with probability one. 

So, given the arbitrary initial network $G_o$, let us define $T$ to be the random time at which $G(t)$ becomes a hierarchical structure as characterized by $G(t_4)$. Then, it is easy to observe that $T$ is a stopping time and we can use the first Borel-Cantelli lemma (as in~\cite[proof of Theorem~3.1]{PCV-KSC-FB:18p}) to prove that $T<\infty$ with probability one, since any finite sequence of pairs of agents that implements the appropriate changes in the steps~\ref{oink0} to~\ref{oink1} (if any) always exists with positive probability at any time $t$. Then, we can use the strong Markov property, and \emph{assume} that the network formation process starts at $T$ with the graph $G(T)$. Since only new single edges may appear at future time-steps and these single edges cannot form directed cycles, we conclude that, with probability one, $\bar{G}(t+1)\supseteq\bar{G}(t)$ for any $t\geq T$, i.e., $|\bar{G}(t+1)|\geq|\bar{G}(t)|$. Moreover, for any $t\geq T$, we conclude that $G(t)$ has a pair of agents that can establish a single edge at time $t+1$ if and only if $|\bar{G}(t+1)|>|\bar{G}(t)|$ with positive probability. Now, consider any invariant set $\mathcal{S}$. It follow from these observations that if $G(t)\in\mathcal{S}$, then $|\bar{G}(t+1)|=|\bar{G}(t)|$ with probability one. However, this only occurs if and only if no new single edges can be added, i.e., when $|\mathcal{S}|=1$. Then, we conclude that any invariant set is an invariant network and so that the network converges to an equilibrium network.

\end{proof}

\end{document}